\documentclass[
superscriptaddress,
 amsmath,amssymb,
 aps, physrev,
pra,
]{revtex4-2}

\usepackage{graphicx}
\usepackage{bm}
\usepackage{amsmath}
\usepackage{physics}
\usepackage{amssymb}
\usepackage{amsfonts}
\usepackage{amsthm}
\usepackage{bbm}
\usepackage{braket}
\usepackage[normalem]{ulem}
\usepackage{wrapfig}
\usepackage{tikz}
\usepackage{dsfont}
\usepackage{comment}
\usepackage{thmtools,thm-restate}
\usepackage[export]{adjustbox}
\usepackage{mathtools}
\usepackage[noend]{algcompatible}
\usepackage{algorithm}
\usepackage{pdfpages}
\usepackage{outlines}
\usepackage{xcolor}
\usepackage{soul}
\usetikzlibrary{arrows.meta}
\definecolor{darkblue}{rgb}{0,0,0.5}
\usepackage{hyperref}
\hypersetup{
colorlinks=true,
linkcolor=black,
filecolor=blue,
citecolor=darkblue,  
urlcolor=black,
}

\usepackage[strict]{changepage}

\usepackage{setspace}

\newtheorem*{observation}{Observation 1}

\newtheorem*{remark}{Remark}
\newtheorem{theorem}{Theorem}
\newtheorem{definition}{Definition}
\newtheorem{corollary}{Corollary}
\newtheorem{lemma}{Lemma}
\newtheorem*{theorem*}{Theorem}

\newcommand{\be}{\begin{equation}}
\newcommand{\ee}{\end{equation}}

\DeclareMathAlphabet{\mathbbold}{U}{bbold}{m}{n}

\makeatletter

\newcommand{\FigQLtoQM}
{
\begin{figure}[htb!]
    \centering
    \includegraphics[width=\textwidth]{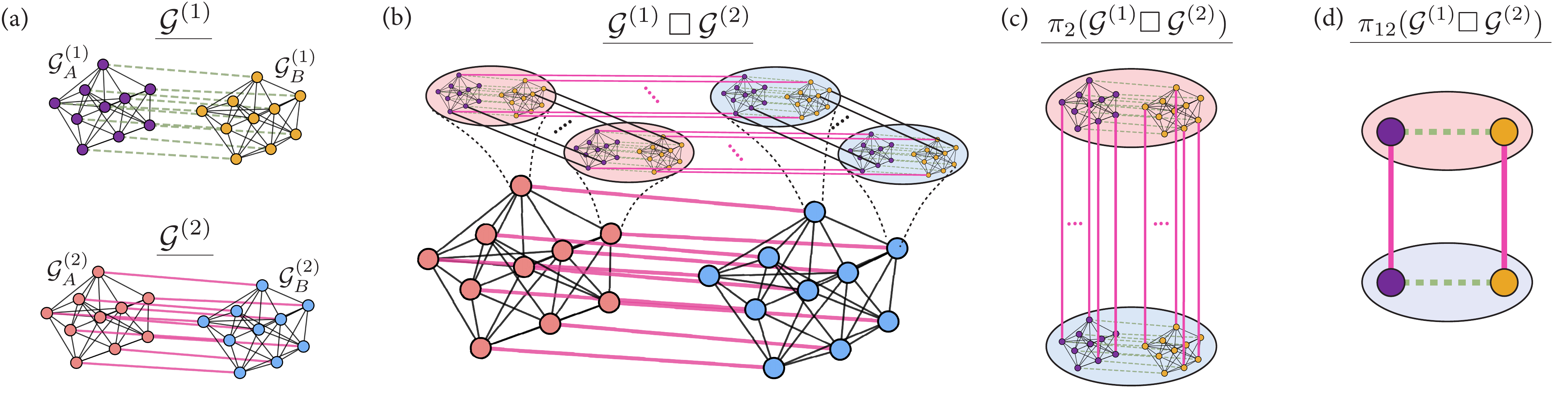}
    \caption {
    \textbf{(a)} Graph $\mathcal{G}^{(1)}$ with subgraphs $\mathcal{G}^{(1)}_{A}$ and $\mathcal{G}^{(1)}_{B}$ shows the construction of a QL-bit. The subgraphs are $k$-regular, with $k = 5$, while the coupling between them is $\ell$-regular, $\ell$ being equal to 1. QL-bit graph $\mathcal{G}^{(2)}$ is also constructed in a similar manner. \textbf{(b)} Cartesian product $\mathcal{G} = \mathcal{G}^{(1)} \Box \hspace{2pt}\mathcal{G}^{(2)}$ of QL-bits in (a). For each node in $\mathcal{G}^{(2)}$, there is a copy of $\mathcal{G}^{(1)}$. \textbf{(c)} Quotient graph formed by taking the equitable partition generated by holding constant the subgraph indices of second graph $\mathcal{G}^{(2)}$ in the Cartesian product.
    \textbf{(d)} The minimal representation of the QL-bit product, subgraphs of each QL-bit are reduced to nodes and the coupling $\ell$ for between each subgraph persists.
    }
    \label{fig: ql to qm}
\end{figure}
}

\newcommand{\FigQLProds}
{
\begin{figure}[htb!]
    \centering
    \includegraphics[width=\textwidth]{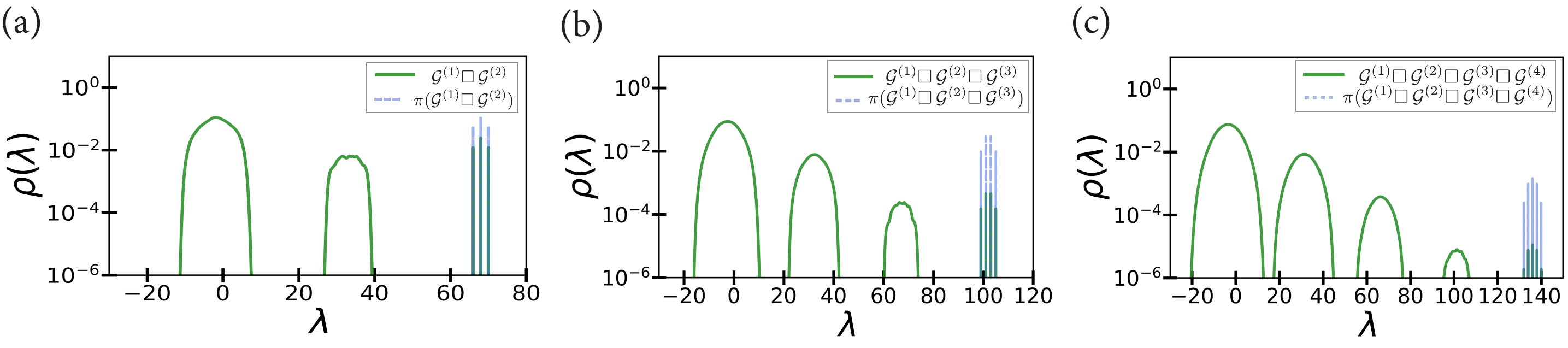}
    \caption{
    Spectra of full Cartesian products and quotient graph when coupling \textbf{(a)} 2, \textbf{(b)} 3, and \textbf{(c)} 4 QL-bits. Each QL-bit is formed from subgraphs of size $|\mathcal{G}^{(q)}_{A, B}| =  40$, $k = 34$ and $\ell=1$. The spectra for both the full Cartesian product and the quotient graph, corresponding to the minimal representation, were generated by the convolution method described in Eq(2.15) of~\cite{amati_quantum_2025}, with the discrete set of incoherent states smoothed by a Gaussian filter.
    }
    \label{fig: quantumlike products}
\end{figure}
}

\newcommand{\FigEffConstruct}
{
\begin{figure}[htb!]
    \centering
    \includegraphics[width=0.8\textwidth]{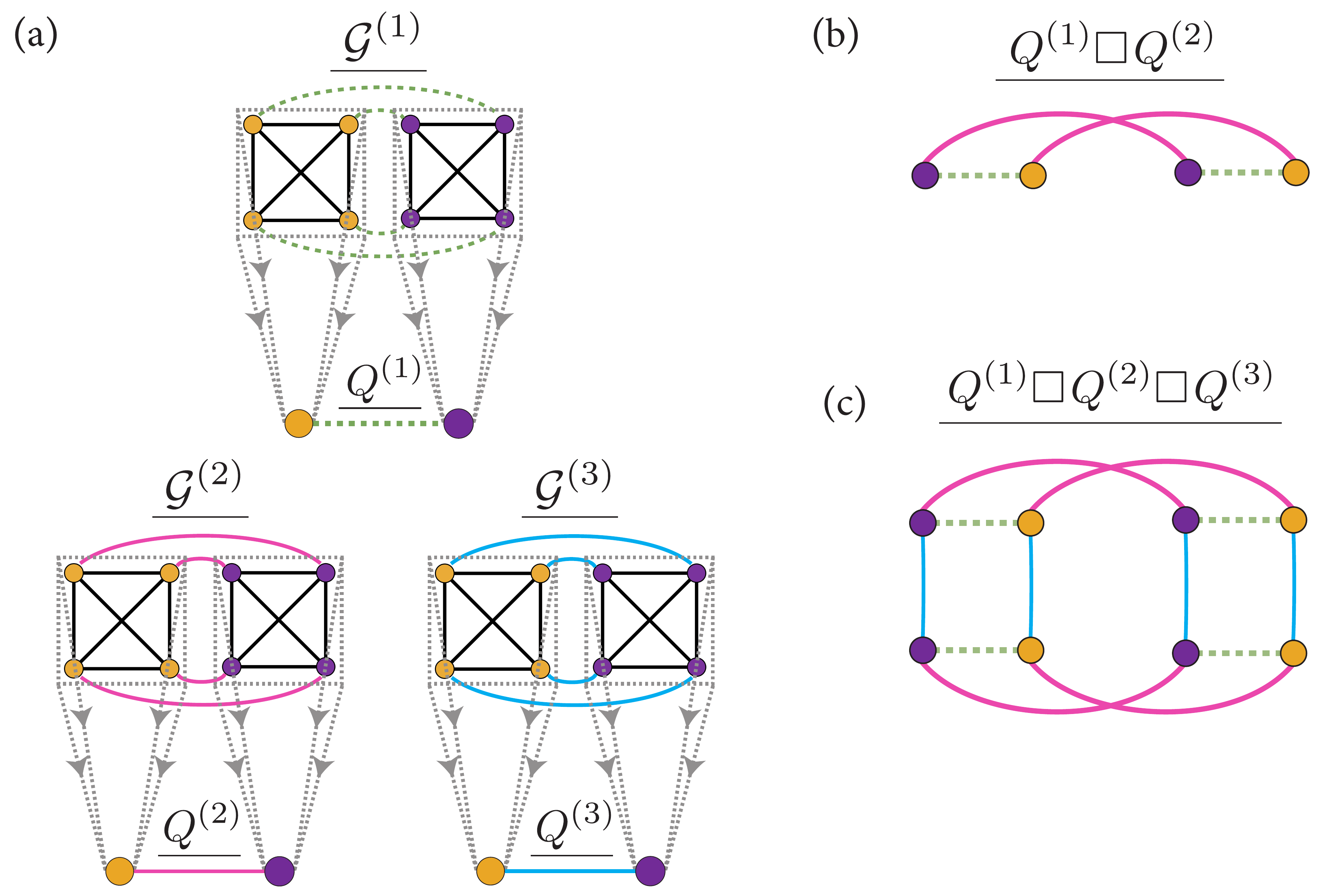}
    \caption{
    Method to form quantum-like products without first generating the Cartesian product: \textbf{(a)} QL-bits $\mathcal{G}^{(1)}$, $\mathcal{G}^{(2)}$ and $\mathcal{G}^{(3)}$ are contracted to their minimal representation quotient graph. Their respective subgraphs are reduced to single vertices (shown using dotted gray arrows) while preserving the coupling $\ell$ between nodes. This allows for the construction of the quotient graph without first forming the full Cartesian product from QL-bits. \textbf{(b)},\textbf{(c)} Direct construction of quotient graphs, $\pi(\mathcal{G}^{(1)} \Box \hspace{2pt} \mathcal{G}^{(2)}) = Q^{(1)} \Box \hspace{1pt} Q^{(2)}$ and $\pi(\mathcal{G}^{(1)} \Box \hspace{2pt} \mathcal{G}^{(2)} \Box \hspace{2pt} \mathcal{G}^{(3)}) = Q^{(1)} \Box \hspace{1pt} Q^{(2)} \Box \hspace{1pt} Q^{(3)}$. The loops generated in taking equitable partition in this construction of composite QL systems have been removed to keep the construction clear. Note that the graphs here are only representational, and that the QL-bit graphs require far more number of nodes per subgraph to have the quantum-like properties.
    }
    \label{fig: efficient construction}
\end{figure}
}

\newcommand{\FigBaseGraph}
{
\begin{figure}[htb!]
    \centering
    \includegraphics[width=0.95\textwidth]{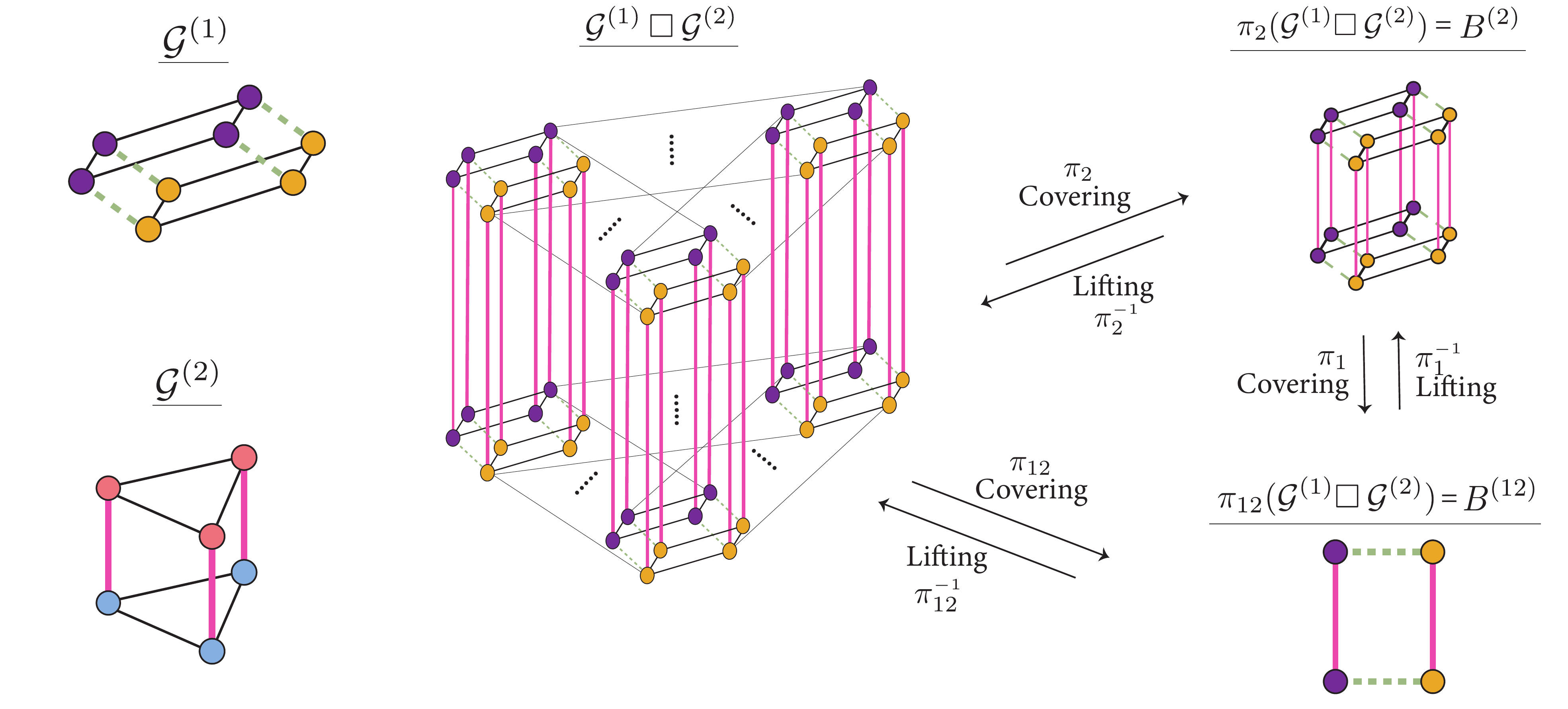}
    \caption{
    Equivalence of the base graph and quotient graph. The Cartesian product of QL-bits $\mathcal{G}^{(1)}$ and $\mathcal{G}^{(2)}$ is reduced to the quotient graph via an equitable partition $\pi$. This reduction is equivalent to a graph fibration that functions as a covering projection over the uniform subgraphs. The inverse operation of the equitable partition is realized through the unique lifting of the base graph's adjacency structure, exactly reconstructing the full graph along the fibers defined by $\pi^{-1}$.
    }
    \label{fig: proof}
\end{figure}
}

\makeatletter
\AtBeginDocument{\let\LS@rot\@undefined}
\makeatother

\allowdisplaybreaks

\begin{document}


\title{Minimal representations of topology-preserving quantum-like states}%

\author{William Horvat}
\thanks{These authors contributed equally to this work.}
\affiliation{College of Letters and Science, University of California, Los Angeles, CA 90095, USA}

\author{Debadrita Saha}
\thanks{These authors contributed equally to this work.}
\affiliation{Department of Chemistry, Princeton University, Princeton, NJ 08544, USA}

\author{Nicol\'as Gigena}
\affiliation{IFLP/CONICET - Departamento de F\'isica, Universidad Nacional de La Plata, C.C. 67, La Plata (1900), Argentina}

\author{Prineha Narang}
\affiliation{College of Letters and Science, University of California, Los Angeles, CA 90095, USA}

\author{Gregory D. Scholes}
\affiliation{Department of Chemistry, Princeton University, Princeton, NJ 08544, USA}
\date{\today}

\begin{abstract}
We provide an equitable partition that gives an exact, minimal representation  for the graph Cartesian product formed from quantum-like bits that preserves the relevant spectral and topological properties. We show that this result follows from the fact that the operations of taking the Cartesian product of graphs and constructing equitable partition of the graphs commute. Numerical simulations illustrate the preserved emergent eigenstates in the reduced structures. Additionally, we provide a construction of the minimal structure without passing through the Cartesian product. Finally, we frame quantum-like structures in the language of topology and fibrations. 
\end{abstract}

\maketitle




\section{Introduction}\label{sec:introduction}
Complex classical networks arise naturally in a wide range of physical, biological, and computational systems, where interactions between components can give rise to collective emergent behavior. Graphs provide an abstract way to represent the correlations and study the emergent behavior arising from synchronization in these complex networks. Ref.~\cite{scholes_quantumlike_2024} establishes the formalism for a quantum-like (QL) bit within the framework of a classically synchronizing network. The study shows that connected subgraphs with expander properties have emergent eigenstates isomorphic to the Hilbert space of a qubit. These emergent states are isolated from the incoherent state eigenspectrum of the system, making them robust to decoherence and suitable for encoding and processing quantum information. 
Building systems from multiple QL-bits is achieved by employing Cartesian products of QL-bit graphs. However, this comes at a computational cost that scales exponentially with the number of QL-bits. The Cartesian product of graphs $\mathcal{G}^{(1)}$ and $\mathcal{G}^{(2)}$ implies that we create a copy of graph $\mathcal{G}^{(1)}$ at each node of graph $\mathcal{G}^{(2)}$ therefore leading to a resource scaling of $N^{N_{QL}}$, where $N_{QL}$ is the number of QL-bits and $N$ is the total number of nodes in a QL-bit,  $N = N_{\mathcal{G}_A} + N_{\mathcal{G}_B}$~\cite{hammack_handbook_2011}. Since these graphs represent some underlying classical network, the number of physical resources required increases exponentially with the number of QL-bits used to build the Cartesian product graph.

Here  we address the problem of exponential scaling in constructing these composite QL-systems. For that, we employ graph partitioning methods which provide a systematic framework for decomposing large graphs into smaller structured components based on connectivity or symmetry~\cite{drapeau_complete_2024}. Graph partitioning methods find application across a broad spectrum of scientific and engineering fields that require large, complex networks to be decomposed into more tractable components while preserving essential structural properties. 
In several scientific computing applications, computation can be conveniently decomposed in the language of graphs~\cite{pothen_1997_graph}. These include finite element and finite difference calculations, molecular dynamics simulations~\cite{hristo_2019_qmoldyn}, particle-in-cell codes~\cite{birdsall_particle--cell_1991}, and a variety of other computations. Other prominent applications include the decomposition of data structures for parallel computation~\cite{hendrickson_graph_2000}, devising efficient circuit layouts for VLSI design~\cite{hagen_1992_circuit,kahng_2011_vlsi}, the ordering of sparse matrix computations~\cite{lanczos_solution_1952}, and in the study of biological and social networks, it reveals community structure and functional modularity that would otherwise be obscured by the global complexity of the network~\cite{de_domenico_mathematical_2013,bunimovich_finding_2019,boguna_network_2021}. These applications and many more have led to several graph partitioning techniques and algorithms~\cite{hendrickson_multilevel_1995}.

Here we employ a special class of graph partitioning, called equitable partitions, where every vertex in a given cell has the same number of neighbors in each other cell. This regularity condition ties directly to the spectral properties of the graph through the quotient matrix, a smaller matrix that summarizes the inter-cell edge structure. A key property is that the eigenvalues of the quotient matrix are always a subset of the eigenvalues of the full adjacency matrix, making equitable partitions a useful tool for spectral analysis~\cite{drapeau_complete_2024}. Equitable partitions arise naturally in the study of cluster synchronization in oscillator networks~\cite{aguiar_syncep_2018,schaub_cluster_2016}, and in studies of continuous random walks~\cite{ge_pst_2011,bachman_pst_2012,cancado_quotientgraphs_2024} on graphs, where perfect state transfers in a full graph is possible if and only if it is shown to be possible in its quotient graph for an equitable partition, $\pi$.

A complementary perspective is offered by the theory of graph fibrations which are structure-preserving maps between directed graphs satisfying a local lifting condition on edges, analogous to fibrations in algebraic topology. This notion has appeared, often implicitly, across several areas, including spectral graph theory, distributed computing, symbolic dynamics, graph neural networks, and category theory~\cite{boldi_fibrations_2002}. In the context of directed graphs, Boldi and Vigna formalized graph fibrations as a categorical generalization of graph coverings~\cite{boldi_fibrations_2002}. In topology, a fiber bundle consists of a total space, a base space, and a projection map, such that the total space is locally homeomorphic to a product of the base space with a typical fiber; this local triviality ensures that neighborhoods in the base lift consistently to neighborhoods in the total space. Graph fibrations and coverings provide a natural discrete analog of this structure~\cite{leighton_finite_1982,amit_random_2002}, and random lifts of graphs, where fibers are assigned by random permutations have been shown to inherit spectral properties of the base graph, including edge expansion and optimal spectral gap~\cite{amit_randomlifts_2006,bilu_lifts_2006,marcus_interlacing_2013}. Here, the total graph plays the role of the total space, the base graph corresponds to the base space, and the preimage of each base vertex defines a discrete fiber. The fibration condition requires that edges terminating at a vertex in the base graph admit unique lifts to edges terminating at any vertex in its fiber, enforcing a consistent local structure across fibers. From this perspective, equitable partitions can be interpreted as fiber decompositions, capturing local symmetries and structural invariants while reducing the complexity of the total graph. 

In the present work, we leverage the symmetries enforced by the Cartesian product construction and equitable graph partition to construct a minimal representation that maintains the emergent states,  eigenvalues, and relevant topology of the full Cartesian product of QL-bit graphs.

The paper is organized as follows: In Section~\ref{sec:background} we present the relevant notions of graph partitioning. In Section~\ref{sec:minrep} we state the papers main theorem on minimal representations of QL products, along with a discussion on resource scaling and efficient multi-QL-bit systems construction. In Section~\ref{sec:fibrations} we offer a connection between QL-bit structure and graph topology. In Section~\ref{sec:conclusion}, we offer open-questions and directions to further develop quantum-like dynamics and physical implementation. In the Appendices~\ref{thm1_proof}, ~\ref{appendix:minrepcp}, ~\ref{lemma3_proof}, and ~\ref{lemma4_proof} we provide the proofs of the theorems and lemmas stated in the paper. 

%

\section{Preliminaries}\label{sec:background}
In this section we introduce the ingredients needed for the rest of the paper. Section~\ref{subsec:qlbits} reviews the construction of QL-bits from \textit{k}-regular expander graphs and their Cartesian products, mainly following Refs.~\cite{scholes_quantumlike_2024} and ~\cite{scholes_qlcartpdt_2025}. Section~\ref{subsec:equitable} defines equitable partitions and their associated quotient (divisor) graphs, and states the spectral relationship between a graph and its quotient that underlies the minimal representations developed in Section~\ref{sec:minrep}. Section~\ref{subsec:qlbits}, extending the construction to almost equitable partitions and establishing eigenvalue bounds for QL-bits with approximate regularity.

\subsection{Quantum-like resource from expander graphs}\label{subsec:qlbits}
 Synchronization in complex classical networks can give rise to emergent states that are spectrally isolated and robust to decoherence, as discussed in Section I. Ref.~\cite{scholes_quantumlike_2024} shows how the properties of expander graphs representing such networks can be used to construct two-level systems that serve as a quantum-like (QL) bit. In the following, we give a short overview of these results. Before proceeding, we first fix some notation: we use $\mathcal{G}$ to denote QL-bits and their graph representations, reserving $G$ for graphs discussed from the graph theory perspective.

We construct a single QL-bit as a network represented by a graph, $\mathcal{G}^{(q)}$, composed of two interacting subgraphs, denoted as $\mathcal{G}^{(q)}_{A}$ and $\mathcal{G}^{(q)}_{B}$~\cite{scholes_quantumlike_2024}. 
A graph $\mathcal{G} = (V, E)$ is defined by its vertex set $V(\mathcal{G})$ and edge set $E(\mathcal{G})$. Each subgraph, $\mathcal{G}^{(q)}_{A}$ and $\mathcal{G}^{(q)}_{B}$, is an undirected $k$-regular random graph, a prototypical example of an expander graph. They are interconnected such that every node in $\mathcal{G}^{(q)}_{A}$ is connected to $\ell$ nodes in $\mathcal{G}^{(q)}_{B}$. Although the graph representation is abstract, it can be made concrete by identifying the nodes in the graph as classical oscillators, and the edges between the nodes in the graph as couplings between those oscillators. 

The adjacency matrix for $\mathcal{G}$ representing a QL-bit, $\mathcal{R}^{(q)}$, takes the following block-structured form:
\begin{align}
    \mathcal{R}^{(q)} = 
    \begin{bmatrix}
        A_{A} & C \\
        C^{T} & A_{B}
    \end{bmatrix}
\end{align}

\noindent where the diagonal blocks, $A_{i}$, represents the adjacency matrix of the $k$-regular subgraphs, and the off-diagonal blocks $C$ represents the adjacency structure of the $\ell$-regular coupling between the vertex sets $V(\mathcal{G}^{(q)}_{A})$ and $V(\mathcal{G}^{(q)}_{B})$. The two largest eigenvalues of $\mathcal{R}^{(q)}$ at $k\pm \ell$ that are isolated from its incoherent state spectrum and their corresponding eigenstates are considered as the two states of the QL-bit. Although the class of expander graphs are valid structures for constructing QL-bits, we focus here on $k$-regular graphs. We also consider strict \textit{k}- and $\ell$-regularity here but earlier work shows that the emergent state is robust to structural, as well as energetic perturbations, in the network~\cite{scholes_large_2023,scholes_quantumlike_2024}. 

To form composite systems comprised of multiple QL-bits, we employ the Cartesian product of graphs~\cite{scholes_qlcartpdt_2025} or the graph box product. The Cartesian product of $N_{QL}$ such QL-bit graphs is given by $\mathcal{G} = \mathop{\Box}\limits_{q=1}^{N_{QL}} \mathcal{G}^{(q)}$, and the corresponding adjacency matrix, $\mathcal{R}$, can be expressed in terms of the individual adjacency matrices, $\mathcal{R}^{(q)}$, of the QL-bits as
\begin{align}
    \mathcal{R} = \sum_{q=1}^{N_{QL}} \left[ \bigotimes_{p=1}^{q-1} \mathbbold{1}_{2n} \right] \otimes R^{(q)} \otimes \left[ \bigotimes_{p' = q+1}^{N_{QL}} \mathbbold{1}_{2n} \right],
    \label{eqn:cartprodt}
\end{align}
where $n$ is the number of vertices in each subgraph of $\mathcal{G}^{(q)}$. The construction of the Cartesian product graph, $\mathcal{G} = \mathcal{G}^{(1)} \Box  \hspace{2pt} \mathcal{G}^{(2)}$, is shown in Figure.~\ref{fig: ql to qm}(b). The Cartesian product of individual QL-bits has been shown to map to the tensor product of Hilbert spaces corresponding to quantum bits~\cite{scholes_qlcartpdt_2025}. From this construction, the eigenspectrum of $\mathcal{R}$ has $2^{N_{QL}}$ emergent eigenvalues that can be expressed as the sums of the emergent eigenvalues, $\lambda_{i}^{(q)}$, of the constituent graphs~\cite{scholes_qlcartpdt_2025}. The spectrum of $\mathcal{G}$ will have all possible values given by $\sum\limits_{q=1}^{N_{QL}}\sum\limits_{i}\lambda_{i}^{(q)}$. The corresponding emergent eigenstates of these product graphs are tensor products of the emergent states of the constituent QL-bits.

Furthermore, to enable encoding and processing of information using these QL resources, Ref.~\cite{amati_quantum_2025} introduces QL analogs of standard quantum gates like the single qubit Hadamard and the Pauli matrices, as well as the two QL-bit controlled-NOT gate. Gate transformations are shown as unitary transformations on the graph adjacency matrices, therefore affecting the underlying physical network of oscillators.

\subsection{Equitable partitions of graphs}\label{subsec:equitable}

In this section, we define equitable graph partitions and introduce the quotient graph. We also discuss the relationship between the eigenspectrum of the full graph and that of its reduced representation, which allow us to establish the minimal representations of a composite system of QL-bits in Section~\ref{sec:minrep}. For the following we consider a graph $G=(V, E)$ with $V$ vertices and a set of $E$ edges defined by an adjacency matrix $A$. A graph partition is a division of the vertex set $V$ into disjoint non-empty subsets $\{C_1, C_2, \dots, C_k\}$, called cells, such that every vertex belongs to exactly one cell. 

\begin{definition}[Equitable Partition]
    The vertex set partition $\pi = \{C_1, C_2, \dots, C_k\}$ is called an equitable partition if for any two cells $C_i$ and $C_j$, every vertex $v \in C_i$ has exactly $d_{ij}$ neighbors in $C_j$.
\end{definition}

In terms of the adjacency matrix $A$, partitioning the vertices according to $\pi$ induces a block structure:
\begin{equation*}A = 
\begin{bmatrix}
A_{11} & \cdots & A_{1k} \\
\vdots & \ddots & \vdots \\
 A_{k1} & \cdots & A_{kk}
\end{bmatrix}
\end{equation*}
where each block (or submatrix) $A_{ij}$ describes the edges between the cells $C_{i}$ and $C_{j}$. The partition $\pi$ is equitable if and only if every block $A_{ij}$ has constant row sums, that is, every vertex in cell $C_{i}$ has exactly $d_{ij}$ neighbors in cell $C_{j}$, regardless of which vertex is chosen~\cite{godsil_compact_1997}. A note on notation, we let $\pi_i: G^{(i)}$ be the equitable partition for graph $i \in \{1, 2, \dots, n\}$, and the product of equitable partitions is $(\pi_i \times \pi_j) = \pi_{ij}$.

Associated with an equitable partition $\pi$ is the \textit{quotient graph} (or divisor graph), we will denote this as $\pi(G)$, $Q$, or $G_{\pi}$, breaking with convention on the first representation. The adjacency matrix of the quotient graph is the $k \times k$ \textit{divisor matrix} (or quotient matrix), $A_{\pi}$, where the entry $A_{\pi,{ij}} = d_{ij}$ represents the degree from a node in cell $C_i$ to the set of nodes in cell $C_j$. 

The spectral significance of the equitable partition is captured by the relationship between the characteristic polynomials of the original adjacency matrix $A$ and the divisor matrix $A_\pi$. Specifically, the characteristic polynomial of the divisor matrix, $\phi(A_\pi) = \det(\lambda I - A_\pi)$, divides the characteristic polynomial of the full graph, $\phi(G) = \det(\lambda I - A)$. Lemma~\ref{lemma: quotient adj} and the subsequent remarks detail the transformation from the full adjacency matrix to the divisor (quotient) matrix.

\begin{lemma} \label{lemma: quotient adj}
    \emph{(Quotient matrix from characteristic matrix)}. Let $\pi$ be an equitable partition of the graph $G$ with characteristic matrix $S$, $A$ the adjacency matrix of $G$, and $A_\pi$ the adjacency matrix of the $\pi(G)$~\cite{drapeau_complete_2024}. Then,
    \begin{gather*}
    AS = SA_{\pi} \\
    A_{\pi} = (S^T S)^{-1} S^T A S.
    \end{gather*}
\end{lemma}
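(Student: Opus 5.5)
We prove both identities by comparing entries, using only the definition of the characteristic matrix and the constant row-sum property of an equitable partition. Let $S$ be the $|V|\times k$ characteristic matrix of $\pi=\{C_1,\dots,C_k\}$, so $S_{vj}=1$ if $v\in C_j$ and $0$ otherwise. Each row of $S$ has exactly one nonzero entry, and the $j$-th column of $S$ is the indicator vector $\mathbbm{1}_{C_j}$ of cell $C_j$.

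\textbf{Step 1: the intertwining relation $AS=SA_\pi$.} For a vertex $v$ and a cell index $j$,
\begin{equation*}
(AS)_{vj}=\sum_{u\in C_j}A_{vu},
\end{equation*}
which is the number of neighbors of $v$ in $C_j$. Let $v$ lie in cell $C_i$. By the definition of an equitable partition, this count equals $d_{ij}$ and does not depend on the choice of $v\in C_i$. On the other side,
\begin{equation*}
(SA_\pi)_{vj}=\sum_m S_{vm}(A_\pi)_{mj}=(A_\pi)_{ij}=d_{ij},
\end{equation*}
since row $v$ of $S$ selects only the index $m=i$. The two entries agree for every $v$ and $j$, so $AS=SA_\pi$. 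This is the only step where equitability is used, and it is exactly the block row-sum characterization stated after the definition.

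\textbf{Step 2: invertibility of $S^TS$.} The columns of $S$ are indicator vectors of disjoint, nonempty cells. They are therefore mutually orthogonal, and $S^TS=\mathrm{diag}(|C_1|,\dots,|C_k|)$. Every $|C_j|\ge 1$, so this diagonal matrix is invertible. Equivalently, $S$ has full column rank $k$.

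\textbf{Step 3: solving for $A_\pi$.} Left-multiply the relation from Step 1 by $S^T$ to get $S^TAS=S^TSA_\pi$. Then multiply on the left by $(S^TS)^{-1}$, which exists by Step 2. This gives $A_\pi=(S^TS)^{-1}S^TAS$.

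As an optional check, the $(i,j)$ entry of this formula is $|C_i|^{-1}$ times the number of edges running from $C_i$ to $C_j$. Under equitability that edge count is $|C_i|\,d_{ij}$, so the entry is $d_{ij}$, as it should be.

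The argument has no real obstacle. The only points needing care are reading off the entries of $AS$ correctly in Step 1 and using that the cells are nonempty in Step 2.
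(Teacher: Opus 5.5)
Your proof is correct and follows the paper's reasoning. The paper defers Lemma~1 to Drapeau et al., but your entrywise comparison $(AS)_{vj}=|N(v)\cap C_j|$ versus $(SA_\pi)_{vj}=(A_\pi)_{c(v),j}$ is the same computation it carries out in Appendix~\ref{appendix:minrepcp} (proof of Lemma~6, the converse), and your inversion step rests on the remark right after the lemma that $S^TS=\mathrm{diag}(|C_1|,\dots,|C_k|)$ is nonsingular.
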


We construct the characteristic matrix, $S$, in the manner outlined in~\cite{drapeau_complete_2024}. Given a graph $G$ on $|V(G)| = n$ vertices with a vertex set partition $\pi = \{C_1,\dots, C_k\}$, the characteristic matrix of $G$ with respect to $\pi$ is the matrix $S = [s_{ij}] \in \{0, 1\}^{|V(G)|\times k}$ where each column of $S$ represents a partition element, and each row represents a vertex. Entries $s_{ij}$ are,
$$s_{ij} = 
\begin{cases}
  1 & \text{if $i \in C_j$} \\
  0 & \text{otherwise}.
\end{cases}
$$

\noindent Since the columns of $S$ are orthogonal and each vertex belongs to exactly one cell, $S^T S$ is the nonsingular diagonal matrix $diag(|C_1|, |C_2|, \dots, |C_k|)$

Considering the eigenspectrum of $A$ and $A_{\pi}$, Lemma~\ref{lemma: eigenpairs} states the relationship of the eigenpairs of the two adjacency matrices. Particularly, every eigenpair of the divisor matrix lifts to an eigenpair of the full adjacency matrix~\cite{drapeau_complete_2024,brouwer_spectra_2012}. 

\begin{lemma} \label{lemma: eigenpairs}
    \emph{(Eigenpairs of the divisor matrix)}. If $\pi$ is an equitable partition of a graph G with an Hermitian adjacency matrix and ($\lambda$, \textbf{v}) is an eigenpair of the divisor matrix $A_{\pi}$, then ($\lambda$, S\textbf{v}) is an eigenpair of $A$\textnormal{~\cite{brouwer_spectra_2012}}.
\end{lemma}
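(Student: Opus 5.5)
The plan is to use the intertwining relation of Lemma~\ref{lemma: quotient adj}, $AS = SA_{\pi}$, directly. Given an eigenpair $(\lambda,\mathbf{v})$ of $A_\pi$, so that $A_\pi \mathbf{v} = \lambda \mathbf{v}$ with $\mathbf{v}\neq 0$, multiply on the left by the characteristic matrix $S$. This gives $S A_\pi \mathbf{v} = \lambda S\mathbf{v}$. Substituting $SA_\pi = AS$ turns it into $A(S\mathbf{v}) = \lambda (S\mathbf{v})$. So $S\mathbf{v}$ satisfies the eigenvalue equation for $A$ with the same eigenvalue $\lambda$.

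It then remains to check that $S\mathbf{v}$ is a genuine eigenvector, i.e.\ $S\mathbf{v}\neq 0$. The columns of $S$ are indicator vectors of disjoint nonempty cells, so they are linearly independent. Equivalently, $S^TS=\mathrm{diag}(|C_1|,\dots,|C_k|)$ is nonsingular, which gives $S$ trivial kernel. Concretely, the entry of $S\mathbf{v}$ at a vertex $u\in C_j$ is just $v_j$, so $S\mathbf{v}$ is the vector $\mathbf{v}$ made constant on each cell. It vanishes only if $\mathbf{v}=0$.

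The only substantive ingredient is the relation $AS=SA_\pi$ itself, which I take from Lemma~\ref{lemma: quotient adj}. If one wants it self-contained, it can be verified entrywise. For a vertex $u\in C_i$, the $(u,j)$ entry of $AS$ counts the neighbours of $u$ in $C_j$, which equals $d_{ij}$ by equitability. The $(u,j)$ entry of $SA_\pi$ is $(A_\pi)_{ij}=d_{ij}$.

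The Hermitian hypothesis is not needed for the lifting itself; it only ensures that $\lambda$ is real and that the lifted eigenvectors behave well with respect to orthogonality. I expect no real obstacle: the nonvanishing of $S\mathbf{v}$ is the one point needing care, and it is handled by the injectivity of $S$.
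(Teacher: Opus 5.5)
Your proof is correct and is the standard argument the paper relies on. The paper gives no separate proof (it cites Brouwer--Haemers), but it supplies exactly your two ingredients: the intertwining relation $AS=SA_\pi$ of Lemma~\ref{lemma: quotient adj}, and the nonsingularity of $S^TS=\mathrm{diag}(|C_1|,\dots,|C_k|)$, which guarantees $S\mathbf{v}\neq 0$. Your entrywise check of $AS=SA_\pi$ is the same computation the paper carries out in Appendix~\ref{appendix:minrepcp} (proof of Lemma~6), and you are right that the Hermitian hypothesis plays no role in the lifting itself.
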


It follows directly that the spectrum of the quotient graph is contained in the spectrum of the original graph, $\sigma(A_\pi) \subseteq \sigma(G)$~\cite{bunimovich_isospectral_2014}. The inclusion and  ordering of the emergent eigenvalues within the full graph spectrum are guaranteed by Cauchy's Interlacing Theorem~\cite{haemers_interlacing_1995}. This is a key property for our purposes as it guarantees that reducing to the quotient graph does not introduce spurious eigenvalues or discard those relevant to the physics of the system.

The equitable partition defined above requires the constant row-sum condition to hold both within and between cells. It is natural to ask whether the spectral results above survive when these conditions are relaxed. We address this question here for completeness. The almost equitable partitions introduced below relax the equitable partition to require only that connectivity between distinct cells be regular, leaving connectivity within each cell unconstrained. This notion is equivalent to equitable partitions of the graph Laplacian. The results stated above for equitable partitions on adjacency matrices carry over directly to equitable partitions on the graph Laplacian. 

The graph Laplacian is defined as, $L = D - A$, where $D \in \mathbb{R}^{n \times n}$ is a diagonal matrix containing the row sums of the adjacency matrix, and $A$ is the adjacency matrix.

\begin{definition}[Almost equitable partition]
    Let $N(v)$ denote the set of neighbors of $v \in V$. An almost equitable partition is a partition of the vertex set $\pi = \{V_1, \dots, V_k\}$ such that for $V_i, V_j \in \pi$ where $i \neq j$, there exists $d_{ij} \in \mathbb{N}$ such that for all $v \in V_i$, $v$ has exactly $d_{ij}$ neighbors in $C_j$~\cite{timofeyev_cluster_2025}.
\end{definition}

The partition cell a vertex is in determines the number of neighbors that vertex has in each other partition cell. Of note, an almost equitable partition ignores the connectivity within a partition cell. We can recover the equitable partition when the connectivity within each partition is regular.

As similarly done for equitable partitions on the adjacency matrix, we can coarse grain the graph's structure with an almost equitable partition by generating the quotient graph Laplacian. Using Lemma~\ref{lemma: quotient adj} and defining the characteristic matrix of $\pi$ as $P$, the quotient Laplacian matrix of $L$ with respect to the partition $\pi$ is~\cite{timofeyev_cluster_2025,brouwer_spectra_2012}, 
\begin{gather*}
    LP = PL_{\pi} \\
    L_{\pi} = (P^T P)^{-1} P^T L P.
\end{gather*}

This definition uses $\pi$ to partition $L$ into blocks, whose row sums become the entries of $L_{\pi} \in \mathbb{R}^{k \times k}$. The almost equitable partition that generates $L_{\pi}$ induces its own graph that we call the quotient graph, $Q_{L}$.

Applying Lemma~\ref{lemma: eigenpairs} to equitable partitions of the graph Laplacian, the relationship of eigenpairs between the partitioned and non-partitioned graph Laplacian is that the eigenpair $(\lambda, \emph{\textbf{v}})$ of $L_{\pi}$ if and only if $( \lambda, P\emph{\textbf{v}})$ is an eigenpair of $L$~\cite{brouwer_spectra_2012}. 

\section{Minimal Representations} \label{sec:minrep}
\subsection{Minimal representation of QL-bits}\label{subsec:minqlbit}
A natural equitable partition for the QL-bit graph, $\mathcal{G}$, is one where the nodes in subgraphs form the cells, i.e. partitions, $\pi = \{C_1, C_2\}$, and each node in $C_{1}$ has $\ell$ edges in $C_{2}$.  The \textit{k}-regularity of the QL-bit subgraphs  and the $\ell$-regularity of the inter-graph couplings ensure that there are exactly \textit{k} edges from each node within each subgraph and $\ell$ edges from each node in subgraph $A$ to subgraph $B$, respectively. The corresponding quotient graph, $Q^{(q)}$, of the QL-bit consists of two nodes for each subgraph connected by a weighted edge. The characteristic matrix, $S$ that affects the transformation of $\mathcal{R}^{(q)}$ to the adjacency matrix of the quotient graph is a two-column matrix with each column of dimension equal to the number of nodes in each subgraph. The corresponding quotient matrix, $\mathcal{R}_{\pi}(\mathcal{G})$ of $Q^{(q)}$ is given by,
\begin{align}\label{eqn:rpi-qlbit}
    \mathcal{R}^{(q)}_{\pi} = 
    \begin{bmatrix}
        k_{A} & l \\
        l & k_{B}
    \end{bmatrix}
\end{align}
In general, the regularities of each subgraph need not be equal but $k_{A} = k_{B} = k$, in our case. The eigenvalues of $\mathcal{R}^{(q)}_{\pi}$ can be easily computed to be $k\pm l$, are the emergent eigenvalues of $\mathcal{R}^{(q)}$ thus satisfying $\sigma(\mathcal{R}_{\pi}) \subseteq \sigma(\mathcal{R})$. This is also the minimal representation that allows us to recover the emergent spectra of the QL-bit graph and we use this to our advantage while constructing the minimal representation for the Cartesian product of QL-bits.

An equitable partition can generate loops in the quotient graph. In our case, the choice of an equitable partition for the QL-bits will generate loops that manifest as the $k$-regularity of the QL-bit subgraphs (eqn.~\ref{eqn:rpi-qlbit}). The loop constant, $k$ for each maximum equitable partition of a QL-bit, appears as an additive constant along the main diagonal of the adjacency matrix $\mathcal{R}^{(q)}_{\pi}$. Removing the loop constant can be viewed as a global shift of the spectrum by $N_{QL} \cdot k$, where $N_{QL}$ is the number of QL-bits used in the Cartesian product.

In the event the QL-bit subgraphs deviate from $k$- and $\ell$-regular, but still maintain their expander properties, we use Cauchy's Interlacing Theorem to provide bounds and ordering of the eigenstates. How the eigenvalues shift with deviations in $k$- and $\ell$-regularity can be reviewed in Ref~\cite{stewart1990matrix,horn2012matrix}. Let $\mathcal{R}_{\text{approx}}$ be the adjacency matrix of the QL-bit where strict regularity is relaxed. The eigenvalues associated with $\mathcal{R}_{\text{approx}}$ are denoted as $\tilde{\lambda}_i$. Let $\mathcal{R}_{\pi}$ be the $m \times m$ QL-bit quotient matrix. Even when the partition is no longer strictly equitable, the eigenvalues $\mu_i$ of $\mathcal{R}_{\pi}$ interlace those of the full perturbed system, $\mathcal{R}_{\text{approx}}$,
\begin{equation}
    \tilde{\lambda}_i \ge \mu_i \ge \tilde{\lambda}_{N - m + i}.
\end{equation}

The interlacing ensures that the hierarchy of the emergent states of the effective model, in both the perturbed and unperturbed scenarios, are bounded by the emergent states of interest in the full graph.

Similarly, in the event the QL-bit subgraphs deviate from $k$- and $\ell$-regular, but still maintain their expander properties, we use Cauchy's Interlacing Theorem to provide bounds and ordering of the Laplacian eigenstates. Let $L_{\text{approx}}$ be the graph Laplacian of the QL-bit where strict regularity is relaxed. The eigenvalues associated with $L_{\text{approx}}$ are denoted as $\tilde{\nu}_i$. Let $L_{\pi}$ be the $m \times m$ QL-bit quotient Laplacian matrix. Even when the partition is no longer strictly equitable, the eigenvalues $\theta_i$ of $L_{\pi}$ interlace those of the full perturbed system, $L_{\text{approx}}$,
\begin{equation}
    \tilde{\nu}_i \le \theta_i \le \tilde{\nu}_{N - m + i}.
\end{equation}

This interlacing ensures that the hierarchy of the emergent states of the effective model, in both the perturbed and unperturbed scenarios, are bounded by the emergent states of interest in the full graph. In the graph Laplacian picture, the macroscopic dynamics are governed by the smallest non-zero eigenvalues, while the intra-graph expander modes correspond to the largest eigenvalues.

\subsection{Minimal representation of Cartesian product of QL-bits}\label{subsec:mincp}
Cartesian products of several QL-bits produce composite QL-systems whose emergent spectra maps to the tensor product space of the same number of quantum bits. The main idea of this work is to reduce the computational resources required to represent and manipulate such composite QL-systems while preserving their emergent states. The construction of $\mathcal{G}^{(1)} \square \hspace{2pt} \mathcal{G}^{(2)}$, as given in eqn.~\ref{eqn:cartprodt}, generates copies of $\mathcal{G}^{(1)}$ at every node of $\mathcal{G}^{(2)}$, while these copies are correlated according to the associated long-range coupling $\ell^{(2)}$ and local connectivity structure of $\mathcal{G}^{(2)}$. This construction introduces redundancies and therefore raises a natural question of whether that can be exploited to obtain a reduced, yet spectrally equivalent, representation of the full product graph. 

This can be done via two different pathways starting with $N_{QL}$ QL-bit graphs, represented by their adjacency matrices $\mathcal{R}^{(q)}$. In the first one, we create the Cartesian product of the graphs and then define an equitable partition on the Cartesian product to reduce it to its minimal representation. This consideration motivates a specific choice of equitable partition that identifies and collapses the repeated copies of $\mathcal{G}^{(1)}$ at each node of $\mathcal{G}^{(2)}$'s subgraphs, and their associated couplings between each subgraph of $\mathcal{G}^{(2)}$. Additionally, we enforce an equitable partition on $\mathcal{G}^{(1)}$, reducing each of its subgraphs to a node. Figure~\ref{fig: ql to qm}(c) illustrates the equitable partition that reduces the graph copies to a single pair of QL-bits with couplings dictated by that of the second graph in the Cartesian product, and Figure~\ref{fig: ql to qm}(d) illustrates the subsequent equitable partition on $\mathcal{G}^{(1)}$ that results in the minimal representation.

Considering the properties of equitable partitions (as provided in Section~\ref{sec:background}), we state Lemma~\ref{lemma:equitable-partition} establishing that the natural partition on the Cartesian product is equitable, from which our main theorem establishing the minimal representation follows. Theorem~\ref{thm 1} is stated for the product of two QL-bits, but it extends directly to products of an arbitrary number of QL-bit graphs, ($\mathcal{G} = \mathcal{G}^{(1)} \square \hspace{2pt} \mathcal{G}^{(2)} \square \hspace{2pt} ... \hspace{2pt} \square \hspace{2pt} \mathcal{G}^{(n)}$) due to the associativity of the Cartesian product.

 \begin{lemma}(Equitable partition of Cartesian product graph)
Let $\mathcal{G}^{(1)}, \mathcal{G}^{(2)}$ be QL-bit graphs, where each $\mathcal{G}^{(q)}_{A}$ and $\mathcal{G}^{(q)}_{B}$ is $k^{(q)}$-regular and interconnected by an $\ell^{(q)}$-regular coupling, for $q = 1,2$. Let $\mathcal{G} = \mathcal{G}^{(1)} \square \hspace{2pt} \mathcal{G}^{(2)}$, and let $A^{(i)} := V(\mathcal{G}_A^{(i)})$, $B^{(i)} := V(\mathcal{G}_B^{(i)})$ denote the vertex sets of subgraphs $\mathcal{G}_A^{(i)}$ and $\mathcal{G}_B^{(i)}$ of the $i^{\text{th}}$ graph. Define $C_1 = A^{(1)} \times A^{(2)}, \quad
    C_2 = A^{(1)} \times B^{(2)}, \quad
    C_3 = B^{(1)} \times A^{(2)}, \quad
    C_4 = B^{(1)} \times B^{(2)}$. Then $\pi = \{C_1, C_2, C_3, C_4\}$ is an equitable partition of $\mathcal{G}$.
\label{lemma:equitable-partition}
\end{lemma}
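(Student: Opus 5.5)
The plan is to verify the defining condition of an equitable partition directly, using the adjacency rule of the Cartesian product: $(u_1,u_2)\sim(v_1,v_2)$ in $\mathcal{G}$ if and only if either $u_1=v_1$ and $u_2\sim v_2$ in $\mathcal{G}^{(2)}$, or $u_2=v_2$ and $u_1\sim v_1$ in $\mathcal{G}^{(1)}$. First, record that $\pi^{(q)}=\{A^{(q)},B^{(q)}\}$ is an equitable partition of each QL-bit $\mathcal{G}^{(q)}$, with quotient matrix $\mathcal{R}^{(q)}_\pi$ as in Eq.~(\ref{eqn:rpi-qlbit}). This holds because every vertex of $A^{(q)}$ has $k^{(q)}$ neighbours in $A^{(q)}$ and $\ell^{(q)}$ in $B^{(q)}$. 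The $\ell$-regular coupling is assumed to be regular from both sides, i.e.\ $C$ and $C^T$ both have constant row sums $\ell^{(q)}$; this is automatic when $|A^{(q)}|=|B^{(q)}|=n$. The cells $C_1,\dots,C_4$ are exactly the products $X\times Y$ with $X\in\pi^{(1)}$ and $Y\in\pi^{(2)}$, so they partition $V(\mathcal{G})$. The lemma is thus the special case $\pi=\pi^{(1)}\times\pi^{(2)}$ of the general statement that a product of equitable partitions is equitable for the Cartesian product.

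Next, fix a vertex $(u_1,u_2)\in X\times Y$ and a target cell $X'\times Y'$, and count its neighbours there. Split the neighbours into two disjoint families. The first family consists of ``$\mathcal{G}^{(1)}$-moves'' $(v_1,u_2)$ with $v_1\sim u_1$. Such a neighbour lies in $X'\times Y'$ iff $u_2\in Y'$ (i.e.\ $Y'=Y$) and $v_1\in X'$, which contributes $\delta_{YY'}\,d^{(1)}_{XX'}$. The second family consists of ``$\mathcal{G}^{(2)}$-moves'' $(u_1,v_2)$ with $v_2\sim u_2$, contributing $\delta_{XX'}\,d^{(2)}_{YY'}$. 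The two families are disjoint because $\mathcal{G}^{(q)}$ has no loops, so a neighbour cannot be obtained both ways. Hence the count is
\begin{equation*}
d_{(X,Y),(X',Y')}=\delta_{YY'}\,d^{(1)}_{XX'}+\delta_{XX'}\,d^{(2)}_{YY'},
\end{equation*}
which depends only on the cells and not on the chosen vertex. This proves equitability.

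Finally, I would note that the divisor matrix is therefore $\mathcal{R}^{(1)}_\pi\otimes\mathbbold{1}_2+\mathbbold{1}_2\otimes\mathcal{R}^{(2)}_\pi$, mirroring Eq.~(\ref{eqn:cartprodt}); this sets up Theorem~\ref{thm 1}. Equivalently, taking $S=S^{(1)}\otimes S^{(2)}$ and applying Lemma~\ref{lemma: quotient adj} factorwise gives $\mathcal{R}S=S(\mathcal{R}^{(1)}_\pi\otimes\mathbbold{1}+\mathbbold{1}\otimes\mathcal{R}^{(2)}_\pi)$, an algebraic cross-check. The only genuine obstacle is bookkeeping: making the disjointness of the two move types explicit, and stating the two-sided regularity of the coupling, without which $B^{(q)}$ vertices would not have constant degree into $A^{(q)}$.
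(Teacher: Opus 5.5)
Your proof is correct and follows essentially the same route as the paper: both use the edge rule of $\square$ to split the neighbours of $(u_1,u_2)$ according to which coordinate moves, and then use equitability of $\{A^{(q)},B^{(q)}\}$ in each factor. Your formula $\delta_{YY'}d^{(1)}_{XX'}+\delta_{XX'}d^{(2)}_{YY'}$ packages the paper's explicit $C_1$ computation together with its ``identical argument'' for the other cells, and it also makes the disjointness of the two move types explicit.

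One small correction: two-sided regularity of the coupling is \emph{not} automatic when $|A^{(q)}|=|B^{(q)}|$. For example, $C=\bigl(\begin{smallmatrix}1&0\\1&0\end{smallmatrix}\bigr)$ has constant row sums but not constant column sums. The implication runs the other way: equal sizes follow from two-sided $\ell^{(q)}$-regularity when $\ell^{(q)}>0$, not conversely. So the two-sided condition must be read as part of the hypothesis ``$\ell$-regular coupling'', which is exactly what the paper implicitly does when it invokes equitability of $\{A^{(q)},B^{(q)}\}$.
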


\begin{theorem}\label{thm 1}
    Let $\mathcal{G} = \mathcal{G}^{(1)} \square \hspace{2pt} \mathcal{G}^{(2)}$ be the Cartesian product of two graphs, where $\mathcal{G}^{(1)}$ and $\mathcal{G}^{(2)}$ are partitioned in their subgraphs $\{\mathcal{G}^{(1)}_{A}, \mathcal{G}^{(1)}_{B}\}$ and $\{\mathcal{G}^{(2)}_{A}, \mathcal{G}^{(2)}_{B}\}$. We define an equitable partition $\pi = \{C_1, C_2, C_3, C_4\}$ of $\mathcal{G}$ by contracting the subgraphs of $\mathcal{G}^{(1)}$ and $\mathcal{G}^{(2)}$ to nodes, respectively. The resulting quotient graph is the minimal quotient graph that preserves the tensor product topology and emergent eigenpairs, ($\lambda$, S{\textbf{v}}), of the full Cartesian product.
\end{theorem}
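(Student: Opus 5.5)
We would prove Theorem~\ref{thm 1} in three steps: show the partition $\pi$ is equitable and compute its divisor matrix explicitly; use that matrix to show the emergent eigenpairs lift to $\mathcal{G}$ and carry the tensor-product structure; and finally argue minimality.

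\textbf{Step 1: the divisor matrix.} By Lemma~\ref{lemma:equitable-partition}, $\pi=\{C_1,\dots,C_4\}$ is equitable, so Lemma~\ref{lemma: quotient adj} applies. Order the vertices of $\mathcal{G}$ according to the Kronecker structure of eqn.~\eqref{eqn:cartprodt}. Then the characteristic matrix factors as $S = S^{(1)}\otimes S^{(2)}$, where $S^{(q)}$ is the two-column characteristic matrix of the natural partition of $\mathcal{G}^{(q)}$ from Section~\ref{subsec:minqlbit}. Write $\mathcal{R}=\mathcal{R}^{(1)}\otimes \mathbbold{1}+\mathbbold{1}\otimes\mathcal{R}^{(2)}$. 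Using $\mathcal{R}^{(q)}S^{(q)}=S^{(q)}\mathcal{R}^{(q)}_\pi$ in each tensor factor gives
\begin{equation*}
\mathcal{R}S = S\left(\mathcal{R}^{(1)}_\pi\otimes \mathbbold{1}_2+\mathbbold{1}_2\otimes \mathcal{R}^{(2)}_\pi\right).
\end{equation*}
Since $(S^TS)^{-1}S^T S=\mathbbold{1}$, Lemma~\ref{lemma: quotient adj} identifies
\begin{equation*}
\mathcal{R}_\pi=\mathcal{R}^{(1)}_\pi\otimes \mathbbold{1}_2+\mathbbold{1}_2\otimes \mathcal{R}^{(2)}_\pi,
\end{equation*}
which is the adjacency matrix of $Q^{(1)}\,\Box\, Q^{(2)}$. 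This is exactly the claim that taking Cartesian products commutes with taking the equitable partition.

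\textbf{Step 2: eigenpairs and tensor-product topology.} Each $\mathcal{R}^{(q)}_\pi$ in eqn.~\eqref{eqn:rpi-qlbit} has eigenpairs $(k^{(q)}\pm\ell^{(q)},\, \tfrac{1}{\sqrt2}(1,\pm1)^T)$. Therefore $\mathcal{R}_\pi$ has the four eigenpairs $(\lambda^{(1)}_i+\lambda^{(2)}_j,\ \mathbf{v}^{(1)}_i\otimes\mathbf{v}^{(2)}_j)$. By Lemma~\ref{lemma: eigenpairs}, each of these lifts to the eigenpair
\begin{equation*}
\left(\lambda^{(1)}_i+\lambda^{(2)}_j,\ S(\mathbf{v}^{(1)}_i\otimes\mathbf{v}^{(2)}_j)=(S^{(1)}\mathbf{v}^{(1)}_i)\otimes(S^{(2)}\mathbf{v}^{(2)}_j)\right)
\end{equation*}
of $\mathcal{R}$. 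These are precisely the $2^{2}$ emergent eigenpairs of the product described in Section~\ref{subsec:qlbits}, namely tensor products of the single-QL-bit emergent states. The quotient thus reproduces both the emergent spectrum and the tensor-product (hypercube $Q^{(1)}\Box Q^{(2)}$) connectivity. The argument extends to $N_{QL}$ factors by associativity.

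\textbf{Step 3: minimality (the main obstacle).} The plan is a counting argument combined with a lifting argument.
\begin{itemize}
\item[(i)] \emph{Counting.} Any quotient $A_{\pi'}$ whose lifted spectrum contains the four emergent eigenpairs must have at least four cells, because its eigenvalues lift via Lemma~\ref{lemma: eigenpairs} to linearly independent eigenvectors $S'\mathbf{v}$. Four distinct emergent eigenvalues therefore require $\operatorname{rank} S'\ge 4$. Generically the sums $\pm\ell^{(1)}\pm\ell^{(2)}$ are distinct; the degenerate case $\ell^{(1)}=\ell^{(2)}$ needs care, and there we would count eigenvectors rather than eigenvalues.
\item[(ii)] \emph{Containment.} For an equitable $\pi'$, the lifted eigenvectors $S'\mathbf{v}$ are constant on the cells of $\pi'$. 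The emergent vectors $(S^{(1)}\mathbf{v}^{(1)}_i)\otimes(S^{(2)}\mathbf{v}^{(2)}_j)$ together take distinct value-patterns on each $C_m$. Hence every cell of $\pi'$ must lie inside a single $C_m$, so $\pi'$ refines $\pi$.
\end{itemize}
Together, (i) and (ii) show that $\pi$ is the coarsest partition with the required property, and the quotient has the minimal number of vertices, $2^{N_{QL}}$. The delicate point is (ii): we must justify that representing the eigenvectors requires cell-constancy, which is true for lifts of quotient eigenvectors but must be phrased carefully.
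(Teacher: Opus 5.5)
Your proposal is correct, and it takes a different and stronger route than the paper.

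\textbf{What the paper does.} Its proof of Theorem~1 takes preservation largely for granted: equitability comes from the neighbor count of Lemma~3 and lifting from Lemma~2. It then argues minimality only against partitions \emph{coarser} than $\pi$. Merging cells is said to average the $k$ and $\ell$ degrees and shift the emergent eigenvalues, backed by a closing remark that fewer than four cells leave fewer than four emergent states. This route is shorter and stays at the level of neighbor counts, but it is informal.

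\textbf{What you do differently.} You bring the Kronecker computation the paper reserves for Theorem~2 ($S=S^{(1)}\otimes S^{(2)}$ plus the mixed-product rule) into Theorem~1, which makes the divisor matrix and the tensor-product eigenvectors explicit. You then argue minimality against \emph{arbitrary} partitions. The four lifted vectors carry four distinct $\pm1$ patterns on $C_1,\dots,C_4$ (the rows of the $4\times4$ Sylvester--Hadamard matrix). So any $\pi'$ on which they are all cell-constant must refine $\pi$.

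\textbf{What your route buys.} It shows $\pi$ is the unique coarsest admissible partition. It covers partitions incomparable with $\pi$, which the paper never addresses. It is also immune to spectral degeneracy, and this is not cosmetic. Whenever $\ell^{(1)}=\ell^{(2)}=\ell$ (e.g.\ identical QL-bits), the coarsening $\{C_1,\,C_2\cup C_3,\,C_4\}$ is equitable. Its quotient spectrum is $\{K,\,K\pm2\ell\}$ with $K=k^{(1)}+k^{(2)}$, so it keeps every emergent \emph{eigenvalue}, and the paper's ``averaging shifts the eigenvalues'' claim fails there. Only an eigenvector count such as yours, or the paper's closing ``fewer than four emergent states'' remark, rules it out.

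\textbf{Two simplifications.} First, step (ii) already implies (i), and neither needs distinct eigenvalues: the four lifted vectors are mutually orthogonal for any values of $\ell^{(q)}$. Second, the point you flag as delicate is immediate. Every vector in the column space of $S'$ is constant on the cells of $\pi'$. Conversely, an eigenvector of $\mathcal{R}$ in that column space descends to an eigenvector of $A_{\pi'}$, because $\mathcal{R}S'=S'A_{\pi'}$ and $S'$ has full column rank. So requiring the emergent eigenpairs to have the form $(\lambda,S'\mathbf{v})$ is exactly requiring cell-constancy. That eigenvector reading of the theorem is the one your argument needs, and the degenerate example shows it is the right one.
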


\FigQLtoQM

Theorem 1 relies on the equitable partition that results from a single QL-bit graph (as shown in Section~\ref{subsec:minqlbit}) in addition to Lemma~\ref{lemma:equitable-partition} that can be proven from the structure enforced by the Cartesian product of graphs. The proofs for Lemma~\ref{lemma:equitable-partition} and subsequently for Theorem 1 are given in  Appendix~\ref{thm1_proof}.

The other way of finding the minimal representation of the QL product is to first reduce each of the $N_{QL}$ QL-bits to their minimal form using the equitable partition on the QL-bit graph, and then perform the Cartesian product of these minimally represented $N_{QL}$ QL-bits. This uses the fact that the quotienting process commutes with computing the Cartesian product of the $N_{QL}$ QL-bits. For this we state Theorem~\ref{thm:2}. The proof is given in Appendix~\ref{appendix:minrepcp}.
\begin{theorem}\label{thm:2}(Quotienting commutes with the Cartesian product)
Let $G^{(1)}, G^{(2)}$ be graphs with equitable partitions $\pi_1,\pi_2$, characteristic
matrices $S_1,S_2$, and divisor matrices $A_{\pi_1}, A_{\pi_2}$. Then the product partition
$\pi_1\times\pi_2$ is equitable on $G^{(1)}\,\square\, G^{(2)}$, has characteristic matrix
$S_1\otimes S_2$, and has divisor matrix
\[
A_{\pi_1\times\pi_2} = A_{\pi_1}\otimes I + I\otimes A_{\pi_2}.
\]
Equivalently: $\pi\big(G^{(1)}\square G^{(2)}\big) = Q^{(1)}\square Q^{(2)}$--- reducing each
factor and then taking the Cartesian product gives the same quotient graph as taking the
product and then reducing.
\end{theorem}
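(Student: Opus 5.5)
The proof is a direct matrix computation built on Lemma~\ref{lemma: quotient adj} and the Kronecker form of the Cartesian product adjacency matrix in eqn.~\ref{eqn:cartprodt}. Let $A_1, A_2$ be the adjacency matrices of $G^{(1)}, G^{(2)}$, of orders $n_1, n_2$. Then the adjacency matrix of $G^{(1)}\square G^{(2)}$ is
\[
A = A_1\otimes I_{n_2} + I_{n_1}\otimes A_2 .
\]

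\emph{Step 1: the characteristic matrix.} The cells of $\pi_1\times\pi_2$ are the sets $C^{(1)}_i\times C^{(2)}_j$. Vertex $(u,w)$ lies in $C^{(1)}_i\times C^{(2)}_j$ exactly when $(S_1)_{ui}(S_2)_{wj}=1$. Indexing vertices and cells lexicographically, this shows the characteristic matrix is $S=S_1\otimes S_2$. The cells are nonempty and disjoint because the factor cells are.

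\emph{Step 2: the intertwining relation.} We use the mixed-product property $(X\otimes Y)(Z\otimes W)=XZ\otimes YW$ together with $A_iS_i=S_iA_{\pi_i}$ from Lemma~\ref{lemma: quotient adj}. Writing $k_1, k_2$ for the numbers of cells of $\pi_1, \pi_2$, we compute
\[
AS = A_1S_1\otimes S_2 + S_1\otimes A_2S_2 = S_1A_{\pi_1}\otimes S_2 + S_1\otimes S_2A_{\pi_2} = (S_1\otimes S_2)\left(A_{\pi_1}\otimes I_{k_2} + I_{k_1}\otimes A_{\pi_2}\right).
\]

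\emph{Step 3: equitability from the intertwining relation.} This is the only point needing care, since Lemma~\ref{lemma: quotient adj} is stated in the direction "equitable implies $AS=SA_\pi$," and here we need the converse. I would prove the converse directly. The $(v,c)$ entry of $AS$ is the number of neighbours of $v$ in cell $c$. The $(v,c)$ entry of $SB$, for any matrix $B$, is $B_{c(v),c}$, where $c(v)$ is the cell containing $v$. Hence an identity $AS=SB$ says that the neighbour count of $v$ in cell $c$ depends only on the cell of $v$, so the partition is equitable with divisor matrix $B$.

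Alternatively, equitability can be checked combinatorially. A neighbour of $(u,w)$ in $C^{(1)}_{i'}\times C^{(2)}_{j'}$ arises in one of two ways:
\begin{itemize}
\item via a $G^{(1)}$-edge, which requires $w\in C^{(2)}_{j'}$ and contributes $d^{(1)}_{ii'}\,\delta_{jj'}$;
\item via a $G^{(2)}$-edge, which contributes $\delta_{ii'}\,d^{(2)}_{jj'}$.
\end{itemize}
The total depends only on $(i,j)$, and it equals the stated entry of $A_{\pi_1}\otimes I + I\otimes A_{\pi_2}$.

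\emph{Step 4: uniqueness and the graph statement.} Because $S^TS=(S_1^TS_1)\otimes(S_2^TS_2)$ is nonsingular, the formula $A_\pi=(S^TS)^{-1}S^TAS$ determines $B$ uniquely, so $A_{\pi_1\times\pi_2}=A_{\pi_1}\otimes I+I\otimes A_{\pi_2}$. Finally, $A_{\pi_1}\otimes I+I\otimes A_{\pi_2}$ is by definition the (weighted, possibly looped) adjacency matrix of $Q^{(1)}\square Q^{(2)}$. This gives $\pi(G^{(1)}\square G^{(2)})=Q^{(1)}\square Q^{(2)}$.
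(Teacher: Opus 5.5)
Your proposal is correct and follows the paper's proof essentially step for step. The paper also identifies $S_1\otimes S_2$ as the characteristic matrix of $\pi_1\times\pi_2$ and obtains $AS=S(A_{\pi_1}\otimes I+I\otimes A_{\pi_2})$ from the mixed-product rule. It then proves the same converse of Lemma~\ref{lemma: quotient adj} that you give in Step~3, by comparing $(AS)_{v,c}$ with $B_{c(v),c}$ and getting uniqueness from the invertibility of $S^TS$; your extra combinatorial check and your explicit sizing of the quotient-side identities as $I_{k_1}, I_{k_2}$ are harmless refinements of the same argument.
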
 

The physical reason for this commutativity is the way the Cartesian product of graphs is structured in terms of the two graphs, where one of the graphs is essentially integrated into the other graph. That integration, where graph B replaces each vertex of graph A is equivalent to viewing graph A replacing each vertex of graph B. See Appendix A of Ref.~\cite{scholes_existence_2026} and Appendix B of~\cite{scholes_dynamicsql_2025}. 

Theorem 1, complemented by Theorem 2, is the main result of the paper that allows us to use the Cartesian product picture without having to explicitly compute the full Cartesian product. Figure~\ref{fig:cpquotient} summarizes the two pathways to obtain the minimal representation of the Cartesian product of arbitrary QL-bits, and below we provide an example for the $N_{QL} = 2$ case. Figure~\ref{fig: efficient construction} illustrates the direct construction of a composite QL-system, without first constructing the Cartesian product, for two and three QL-bits. 
\newpage

\begin{figure}
    \centering
    \begin{tikzpicture}[>=Stealth, every node/.style={inner sep=2pt}]
    \tikzset{equation arrow/.style={thick,->,shorten <=2pt,shorten >=5pt}}
    \node (bottom-left)  at (0,0) {$\{\mathcal{Q}^{(1)},\mathcal{Q}^{(2)}, \cdots, \mathcal{Q}^{(N_{QL})}\}$};
    \node (bottom-right) at (8,0) {$\mathcal{G}_{\pi}=\mathop{\Box}\limits_{q=1}^{N_{QL}} \mathcal{Q}^{(q)}$};
    \node (top-left)     at (0,3) {$\{\mathcal{G}^{(1)},\mathcal{G}^{(2)}, \cdots,        \mathcal{G}^{(N_{QL})}\}$};
    \node (top-right)    at (8,3) {$\mathcal{G}=\mathop{\Box}\limits_{q=1}^{N_{QL}} \mathcal{G}^{(q)}$};
    \draw[equation arrow, draw=blue] (bottom-left.east) -- node[above] {$\square$} (bottom-right.west);
    \draw[equation arrow, draw=red] (top-left.east) -- node[above] {$\square$} (top-right.west);
    \draw[equation arrow, draw=blue] (top-left.south) -- node[left]  {$\pi$} (bottom-left.north);
    \draw[equation arrow, draw=red] (top-right.south) -- node[right] {$\pi$} (bottom-right.north);
    \end{tikzpicture}
    \caption{The two pathways of constructing the minimal representation of the Cartesian product of $N_{QL}$ QL-bits made possible by the commutation of the quotienting ($\pi:\mathcal{G}\rightarrow\mathcal{Q}$) with the Cartesian product ($\Box$) formation of the graphs.}
    \label{fig:cpquotient}
\end{figure}

\paragraph*{Example:} We provide here a working example of a two-QL-bit composite graph and its quotient, with their respective computed adjacency matrices. The full Cartesian Product graph for two QL-bits is shown in Figure~\ref{fig: ql to qm}(a) and the corresponding adjacency matrix is given by,
\begin{align}
    \mathcal{R} = 
    \begin{bmatrix}
        A_{A}^{(1)} \otimes I_{n} + I_{n} \otimes A_{A}^{(2)} & I_{n} \otimes C^{(2)} & C^{(1)} \otimes I_{n} & 0 \\
        I_{n} \otimes C^{(2)T} & A_{A}^{(1)} \otimes I_{n} + I_{n} \otimes A_{B}^{(2)} & 0 & C^{(1)} \otimes I_{n} \\
        C^{(1)T} \otimes I_{n} & 0 & A_{B}^{(1)} \otimes I_{n} + I_{n} \otimes A_{A}^{(2)} & I_{n} \otimes C^{(2)} \\
        0 & C^{(1)T} \otimes I_{n} & I_{n} \otimes C^{(2)T} & A_{B}^{(1)} \otimes I_{n} + I_{n} \otimes A_{B}^{(2)}
    \end{bmatrix}
    \label{eqn:full-block-adjacency}
\end{align}
where $A_{A}^{(q)}$ corresponds to subgraph $A$ of the q$^{th}$ QL-bit graph, $\mathcal{G}^{(q)}$. The divisor matrix for the quotient matrix graph following the partitions in Theorem 1 and Lemma 5 can be written in terms of the constant $d_{ij}$'s between the partitions of the full Cartesian product as:
\begin{align}
    \mathcal{R}_{\pi} =
    \begin{pmatrix}
        k^{(1)} + k^{(2)} & \ell^{(2)} & \ell^{(1)} & 0 \\
        \ell^{(2)} & k^{(1)} + k^{(2)} & 0 & \ell^{(1)} \\
        \ell^{(1)} & 0 & k^{(1)} + k^{(2)} & \ell^{(2)} \\
        0 & \ell^{(1)} & \ell^{(2)} & k^{(1)} + k^{(2)}
    \end{pmatrix}.
    \label{eqn:divisor-matrix-4x4}
\end{align}
Here $k^{(q)}$ and $\ell^{(q)}$ correspond to the intra- and inter-graph coupling regularities, and the subgraph regularities for subgraphs $A$ and $B$ for a particular QL-bit is assumed to be equal. The $k$- and $\ell$-regularities of the QL-bits $\mathcal{G}^{(1)}$ and  $\mathcal{G}^{(2)}$ guarantee the existence of exact equitable partitions and ensures that the emergent eigenpairs are preserved under the quotient construction. 

For QL-bit constructions that depart from strict regularity toward more general expander graphs, the framework of almost equitable partitions~\cite{cardoso_aep_2007} provides a natural generalization, wherein the constant neighbor-count condition is relaxed and rigorous bounds on the deviation of the emergent eigenpairs from the exact case can still be established. 

We present here the minimal representation of QL-bits. However, there are intermediate or reduced  representations depending upon the level of abstraction used for each QL-bit. Specifically, the minimal representation depends on which part of the emergent eigenspectrum is physically relevant. If only the isolated emergent states are required, the full reduction of Theorem~\ref{thm 1} applies. If one requires the presence of the incoherent states in the spectrum of the QL-bit, then an intermediate partition like that shown in Figure~\ref{fig: ql to qm} (c) would be a more appropriate reduced representation.
 
\begin{corollary}\label{corollary 1}
    Given the iterative Cartesian product construction of $\mathcal{G}$, the minimal representation of a multi-QL-bit system can be obtained through iterative application of the graph quotient defined in Theorem~\ref{thm 1}. Specifically, the equitable partition and quotient construction can be applied sequentially to any pair of factors at each step, and the associativity of the Cartesian product guarantees that the final minimal quotient graph is independent of the order in which these reductions are performed.
\end{corollary}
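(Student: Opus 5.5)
The plan is an induction on the number of factors $N_{QL}$, with Theorem~\ref{thm:2} as the engine and associativity of $\square$ (together with associativity of the Kronecker product) making the result independent of order. First I isolate the invariant that is actually being transported at each step. Call it the triple (graph, equitable partition, characteristic matrix). For a single QL-bit $\mathcal{G}^{(q)}$ this triple is $(\mathcal{G}^{(q)}, \pi_q, S_q)$, where $\pi_q=\{A^{(q)},B^{(q)}\}$ is the equitable partition of Section~\ref{subsec:minqlbit}. The divisor matrix is $\mathcal{R}^{(q)}_\pi$ of Eq.~\eqref{eqn:rpi-qlbit}.

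\textbf{Inductive step.} Suppose $\mathcal{H}=\mathcal{G}^{(1)}\square\cdots\square\mathcal{G}^{(m)}$ carries the product partition $\Pi_m=\pi_1\times\cdots\times\pi_m$. Assume $\Pi_m$ is equitable on $\mathcal{H}$, has characteristic matrix $S_1\otimes\cdots\otimes S_m$, and has divisor matrix $\sum_q I\otimes\cdots\otimes\mathcal{R}^{(q)}_\pi\otimes\cdots\otimes I$, which is the adjacency matrix of $Q^{(1)}\square\cdots\square Q^{(m)}$. Apply Theorem~\ref{thm:2} with $G^{(1)}=\mathcal{H}$, $\pi_1=\Pi_m$, $G^{(2)}=\mathcal{G}^{(m+1)}$ and $\pi_2=\pi_{m+1}$. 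Since Theorem~\ref{thm:2} is stated for arbitrary graphs with equitable partitions, not only QL-bits, it gives three conclusions. First, $\Pi_m\times\pi_{m+1}=\Pi_{m+1}$ is equitable. Second, its characteristic matrix is $(S_1\otimes\cdots\otimes S_m)\otimes S_{m+1}$. Third, its divisor matrix is $A_{\Pi_m}\otimes I+I\otimes\mathcal{R}^{(m+1)}_\pi$, which expands to the $(m+1)$-term sum above. This closes the induction. It also establishes the sequential claim of the corollary: at step $m$ one may either quotient the full product directly or form $\pi(\mathcal{H})\square Q^{(m+1)}$, and the two agree. Lemma~\ref{lemma: eigenpairs} then transports the emergent eigenpairs: every $(\lambda,\mathbf v)$ of the final divisor matrix lifts to $(\lambda,(S_1\otimes\cdots\otimes S_{N_{QL}})\mathbf v)$. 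This recovers the $2^{N_{QL}}$ sums $\sum_q\lambda^{(q)}_i$ and their tensor-product eigenvectors, so the result is the minimal representation of Theorem~\ref{thm 1} for $N_{QL}$ factors.

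\textbf{Order independence.} This is the step I expect to be the real obstacle, since it is where the phrase ``any pair of factors'' must be made precise. The reductions may be performed in any bracketing and in any order of pairing. The invariant under rebracketing is structural: every bracketing yields the same vertex set $\prod_q V(\mathcal{G}^{(q)})$ up to the canonical identification, the same cells $\prod_q C^{(q)}_{i_q}$, and, by associativity of $\otimes$, the same characteristic matrix and the same sum-of-Kronecker divisor matrix. Theorem~\ref{thm:2} can also be applied to a pair of factors that are themselves partial products, such as $(\mathcal{G}^{(1)}\square\mathcal{G}^{(2)})\square(\mathcal{G}^{(3)}\square\mathcal{G}^{(4)})$. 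The argument therefore reduces to showing that the triple is well defined on equivalence classes of bracketings.

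\textbf{Reordering the factors.} Pairing non-adjacent factors (e.g.\ $\mathcal{G}^{(1)}$ with $\mathcal{G}^{(3)}$) additionally uses commutativity of $\square$ up to isomorphism. Here I would argue carefully that the canonical permutation isomorphism $P$ of $\prod_q V(\mathcal{G}^{(q)})$ maps cells to cells. It then conjugates the divisor matrix by the corresponding permutation of cell labels, so the quotient graph is the same up to relabelling of its $2^{N_{QL}}$ vertices. Verifying this compatibility of $P$ with $S_1\otimes\cdots\otimes S_{N_{QL}}$ and with the Kronecker-sum form is the one nonroutine check. I would do it with the commutation matrix identity $K(X\otimes Y)K'=Y\otimes X$.
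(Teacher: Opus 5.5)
Your proposal is correct and is essentially the paper's own argument. The multi-factor Remark in Appendix~\ref{appendix:minrepcp} obtains the statement by applying Theorem~\ref{thm:2} inductively to the pair $(G^{(1)}\square\cdots\square G^{(N-1)},G^{(N)})$ together with associativity of $\square$. Lemma~\ref{lemma: associativity} then gets order independence from the fact that every grouping yields the same composite map $(u_1,u_2,u_3)\mapsto(\pi_1(u_1),\pi_2(u_2),\pi_3(u_3))$, which is the fibration-language version of your observation that every bracketing produces the same cells and the same $\bigotimes_q S_q$.

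Your extra commutation-matrix check for pairing non-adjacent factors is valid but can be skipped. The same mixed-product computation, with $S_q$ replaced by $I$ on the factors left unreduced, shows that any subset of factors can be reduced in place without reordering the factors of $\square$.
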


The associativity of the graph quotienting process is presented in the language of graph fibrations in Lemma~\ref{lemma: associativity}, where it is shown that sequential fibration maps on disjoint factors of the Cartesian product result in isomorphic quotient graphs regardless of the grouping of factors.

\subsection{Resource reduction in the Cartesian product free construction}\label{subsec:reduction}

In quantum mechanics, composite systems are described by tensor products $\mathcal{H}_{1}\otimes\mathcal{H}_{2}\otimes \cdots \otimes\mathcal{H}_{n}$, and in our framework, this maps to the Cartesian product of graphs $\mathcal{G}^{(1)} \Box \hspace{2pt} \mathcal{G}^{(2)} \Box \cdots \Box \hspace{2pt} \mathcal{G}^{(n)} $.
Constructing coupled QL-bits using the full Cartesian product results in resource scaling of $(N^{N_{QL}})$, where $N_{QL}$ is the number of QL-bits and $N$ is the total number of nodes in a QL-bit, $N = N_{\mathcal{G}^{(q)}_A} + N_{\mathcal{G}^{(q)}_B}$. 

With our minimal representation in mind, we now examine resource scaling of the reduced QL structures. The reduced QL product scales as $D^{N_{QL}}$, for $N_{QL}$ number of $D$-dimensional quantum-like resources, or QL-dits. In this work $D = 2$ for a QL-bit. For the intermediate representation of the Cartesian product of $N_{QL}$ QL-dits, $N_{QL}-1$ QL-dits are reduced to their minimal representation and one full QL-dit is considered. Therefore, the number of nodes in the reduced QL product in this intermediate representation scales as $N(D^{N_{QL}-1})$, where $N$ is the number of nodes in $\mathcal{G}^{(q)}$. The reduction in resource size was first mentioned in~\cite{scholes_existence_2026}, while our work provides the lowest scaling one can expect when constructing QL resources. 

Given our method for obtaining the minimal representation of QL-bit graphs, it is natural to ask what other low-rank approximations and compression methods, such as tensor networks, offer that the equitable partition approach does not. Tensor-network methods, such as matrix product states, compress the representation of a quantum system by exploiting limited entanglement, providing a physically motivated low rank approximations of the studied system Hilbert space. QL systems, by contrast, represent the full Hilbert space in both their full Cartesian product form and their quotient graph form. The basis states defined by the symmetric and anti-symmetric modes of each QL-bit persist under equitable graph partitioning so the full Hilbert space $\mathbb{C}_{1}^2 \otimes \mathbb{C}_{2}^2 \otimes \cdots \otimes \mathbb{C}_{N_{QL}}^2$ remains represented after the reduction. This comes at a cost. Our method incurs an exponential cost in the number of QL-bits whereas tensor networks can achieve sub-exponential scaling in system size when entanglement is limited.

The minimal representation of the Cartesian product resembles the contracted form presented in Appendix A of Ref.~\cite{scholes_existence_2026}, where the ordering of subgraphs in the Cartesian product introduces a partial ordering on the basis, represented as a Hasse diagram for Boolean posets. This structure is further reflected in the adjacency rule of the Cartesian product where two vertices are adjacent if and only if exactly one subgraph label differs, directly mirroring the Hamming distance~\cite{hamming_error_1950} condition $d_{H}=1$.

Contracting the subgraphs of $\mathcal{G}^{(q)}$ to nodes, while similarly reducing the edges between subgraphs of $\mathcal{G}^{(q)}$ to a single edge, maintains the global correlations required to map the emergent states of QL-products to that of the desired state space.

\FigEffConstruct

\FigQLProds

\section{Graph Fibrations and network structure}\label{sec:fibrations}

Following from the method of equitable partitions, one can frame the graph contraction and expansion in the language of graph coverings, fibrations, and liftings. Before introducing the mapping between domains, we introduce the relevant notions of graph covers, fibrations, and liftings~\cite{boldi_fibrations_2002,gray_homotopy_1975,amit_random_2002}. We define a graph as $G$, and its base graph as $B$.

\begin{definition}[Graph Map (Morphism)]
A graph map (or morphism) $\varphi: G \rightarrow B$ from a total graph $G$ to a base graph $B$ consists of a vertex function $\varphi_V: V(G) \rightarrow V(B)$ and an edge function $\varphi_E: E(G) \rightarrow E(B)$ that preserve connectivity. Specifically, if an edge $e \in E(G)$ originates at a source vertex $s(e)$ and terminates at a target vertex $t(e)$, its projection in the base graph must originate and terminate at the projected vertices. This requires $\varphi_V(s(e)) = s(\varphi_E(e))$ and $\varphi_V(t(e)) = t(\varphi_E(e))$.
\end{definition}

\begin{definition}[Graph Fibration]
A map $\varphi: G \rightarrow B$ is a graph fibration if it satisfies a unique lifting property for the terminating ends of edges. Consider any edge $e \in E(B)$ in the base graph that terminates at a vertex $v = t(e)$, and choose any vertex $x \in V(G)$ in the total graph that projects down to $v$ (i.e., $\varphi_V(x) = v$). A fibration guarantees there is exactly one unique edge $\tilde{e}^x \in E(G)$ terminating at $x$ that projects down to $e$.
\end{definition}

\begin{definition}[Graph Opfibration and Covering]
An opfibration between graphs $G$ and $B$ is a map $\varphi: G \rightarrow B$ that satisfies the unique lifting property for the originating ends of edges. For any edge $e \in E(B)$ originating at a vertex $u = s(e)$, and any vertex $x \in V(G)$ projecting down to $u$ ($\varphi_V(x) = u$), there exists exactly one unique edge $^x\tilde{e} \in E(G)$ originating at $x$ that projects down to $e$. This is the oplifting of $e$ at $x$, satisfying $\varphi_E(^x\tilde{e}) = e$ and $s(^x\tilde{e}) = x$. A graph covering is an opfibration in which all fibers $\varphi^{-1}(u)$ have equal cardinality, making the map regular across all vertices of $B$.
\end{definition}

A map that is simultaneously a fibration and an opfibration is called a covering projection. If a covering projection $\varphi: G \rightarrow B$ exists, $G$ is said to be a covering of $B$. The edge $\tilde{a}^x$ of the undirected graph  is called the unique lifting of $a$ at $x$. 

In this framework, the fibration represents the projection from the original system to the equitable partition, while the lifting property provides the mechanism for the inverse operation, reconstructing $G$ from $B$. Figure 4 depicts the equivalence between the quotient graph formed by our equitable partition and the base graph. 

An equitable partition $\pi = \{C_{1}, \dots, C_{k}\}$ of a graph ensures that for all $v \in C_{i}$, the sum of edge weights to cell $C_{j}$ is a constant $d_{ij}$. This ensures that the adjacency sub-matrix $A_{ij}$ has a constant row sum equal to $d_{ij}$. 

This reduction is equivalent to a graph fibration $\varphi: G \rightarrow B$, where the fibers correspond to the cells of the partition. Under this mapping, the preimage $\varphi^{-1}(b_{i}) = C_{i}$ forms a fiber over the vertex $b_{i} \in V(B)$. The inverse of the equitable partition is realized through the lifting of the base graph's adjacency structure: each edge from $b_{j}$ to $b_{i}$ in the base graph $B$ lifts uniquely to exactly $d_{ij}$ edges entering each vertex in the fiber $C_i$ from vertices in fiber $C_j$.

\begin{lemma} \label{lemma: equivalence}
    \emph{(Equivalence of fibrations and covering projections to equitable partitions for undirected graphs)}
    A graph fibration $\varphi:G \to B$ is equivalent to an equitable partition $\pi$ of $V(G)$ where the fibers of the map correspond to the cells $C_i$ of the partition. Furthermore, for undirected regular graphs like QL-bits, this fibration is simultaneously an opfibration, making the map a covering projection~\cite{boldi_fibrations_2002}.
\end{lemma}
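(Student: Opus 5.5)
We will prove both directions of the correspondence and then the covering statement, working with the cell-count matrix $d_{ij}$ of Definition 1 throughout. The only real work is to pin down what ``equivalent'' must mean. An equitable partition says that each $v\in C_i$ has $d_{ij}$ neighbours in $C_j$, whereas a fibration demands a \emph{unique} lift of each base edge. Both can hold only if the base graph $B$ is the quotient multigraph: $d_{ij}$ parallel directed edges $b_j\to b_i$, with $d_{ii}$ loops at $b_i$. That is the weighted divisor matrix $A_\pi$ of Lemma~\ref{lemma: quotient adj} read as a multigraph. We also regard the undirected graph $G$ as a symmetric digraph, replacing each edge $\{u,v\}$ by the two arcs $u\to v$ and $v\to u$.

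\emph{From fibration to equitable partition.} Given a fibration $\varphi:G\to B$, set $C_i=\varphi_V^{-1}(b_i)$. Since $\varphi_V$ is a function, these sets partition $V(G)$. Fix $x\in C_i$. Every arc of $G$ ending at $x$ maps to an arc of $B$ ending at $b_i$, so $\varphi_E$ restricts to a map from arcs into $x$ to arcs into $b_i$. The unique lifting property says exactly that this restriction is a bijection. Restricting further to arcs whose source lies in $C_j$, their images are the arcs $b_j\to b_i$. Hence the number of in-neighbours of $x$ in $C_j$ equals the number of arcs $b_j\to b_i$ in $B$, and this number does not depend on the choice of $x\in C_i$. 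Because $G$ is symmetric, in-neighbours coincide with neighbours, so $\pi$ is equitable with $d_{ij}=\#\{\text{arcs } b_j\to b_i\}$.

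\emph{From equitable partition to fibration.} Conversely, let $\pi$ be equitable and let $B$ be the quotient multigraph described above. For each $x\in C_i$ and each $j$, the arcs entering $x$ from $C_j$ number exactly $d_{ij}$. Choose any bijection between these arcs and the $d_{ij}$ parallel arcs $b_j\to b_i$, and define $\varphi_E$ to be the union of these bijections. Then $\varphi$ is a graph map, because sources and targets project correctly, and it has the unique in-lifting property by construction. This is the step I expect to be the main obstacle. The correspondence is not canonical, since the labelling of parallel arcs is a choice, so ``equivalent'' has to be read as equivalence up to these choices: equitable partitions correspond to fibrations with the same fibres, up to isomorphism over $B$.

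\emph{Opfibration and covering.} Finally, for the QL-bit case we show the map is also an opfibration. Opfibration needs out-lifts, i.e.\ a bijection between arcs leaving $x\in C_i$ and arcs leaving $b_i$. The arcs leaving $x$ toward $C_j$ are the reverses of the arcs entering $x$ from $C_j$, so there are $d_{ij}$ of them. In $B$, the arcs leaving $b_i$ toward $b_j$ number $d_{ji}$. So the counts match precisely when the quotient is symmetric, $d_{ij}=d_{ji}$.

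Double counting the edges between $C_i$ and $C_j$ gives $|C_i|d_{ij}=|C_j|d_{ji}$. In general this yields symmetry only when the cells have equal size. For QL-bits, and for their Cartesian products via Lemma~\ref{lemma:equitable-partition}, all cells have size $n$ or $n^2$ respectively. Hence $d_{ij}=d_{ji}$, as in \eqref{eqn:divisor-matrix-4x4}.

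With symmetry established, we define $B$ as an undirected quotient, so that each arc $b_j\to b_i$ is paired with its reverse. We then choose $\varphi_E$ compatibly with arc reversal, i.e.\ so that it commutes with the reversal involution. This makes the in-bijections and the out-bijections hold simultaneously, so $\varphi$ is a covering projection, and the equal fibre sizes also meet the covering condition in our definition.
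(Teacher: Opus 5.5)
Your three steps follow the paper's proof: fibres as cells, counted via unique in-lifting; the quotient multigraph as base for the converse; symmetry of the quotient for the opfibration claim. In two places you are more careful than the paper. First, the paper sends every edge between $C_j$ and $C_i$ to ``the same edge'' of $B$, which contradicts unique lifting whenever $d_{ij}>1$. Second, it deduces $d_{ij}=d_{ji}$ from symmetry of the adjacency matrix alone, which fails for general undirected graphs (the star $K_{1,3}$ with the centre/leaves partition has $d_{12}=3$, $d_{21}=1$). Your bijections onto parallel arcs, and your double count $|C_i|d_{ij}=|C_j|d_{ji}$ with equal cell sizes, are the correct repairs.

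The gap is your last step. Equality $d_{ij}=d_{ji}$ only makes the counts agree. The $\varphi_E$ assembled from arbitrary in-bijections need not be an opfibration, since two arcs $y\to x_1$ and $y\to x_2$ out of the same $y$ may be sent to the same arc $b_j\to b_i$. You then assert that $\varphi_E$ can be re-chosen to commute with reversal while keeping in-lifting. That existence claim is an edge-colouring problem, and it is where the real work of the covering statement lies.

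For $i\neq j$ the claim holds by K\"onig's theorem: the $d$-regular bipartite graph between $C_i$ and $C_j$ splits into $d$ perfect matchings, one for each pair of mutually reverse arcs. For the $d_{ii}$ loops at $b_i$ it can fail. Reversal-compatibility forces the edges of $G[C_i]$ sent to a self-reverse loop to form a perfect matching, and those sent to a pair of reverse loops to form an oriented $2$-factor. So when $d_{ii}$ is odd (e.g.\ $k=5$) you need a perfect matching of $G[C_i]$, which an odd-regular graph need not have. For even $d_{ii}$, Petersen's $2$-factor theorem would rescue you.

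The clean repair is to drop reversal-compatibility, which the paper's directed definitions do not demand. For each ordered pair $(j,i)$, including $j=i$, take the bipartite multigraph with a copy of $C_j$ as sources, a copy of $C_i$ as targets, and one edge per arc $y\to x$. It is $d_{ij}$-regular once $d_{ij}=d_{ji}$, so by K\"onig it decomposes into $d_{ij}$ perfect matchings. Sending the $c$-th matching to the $c$-th parallel arc $b_j\to b_i$ gives in- and out-lifting at once. Note that this establishes the existence of a covering projection with the given fibres. It does not show that every fibration with those fibres is an opfibration, as the lemma's wording suggests.
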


The covering projection $\varphi$ defines the many-to-one contraction of $G$ onto the base graph $B$, while the symmetric unique lifting and oplifting properties define the one-to-many expansion, ensuring that the local connectivity seen by any node within a fiber is identical to that from its image in the base graph. Under this equivalence, the adjacency matrix of $G$ admits a block-partition where each block $A_{ij}$ contains the edges between fiber $C_i$ and fiber $C_j$, with constant row sums $d_{ij}$.

\FigBaseGraph

\subsection{Associativity of Fibrations and Coverings}\label{subsec:associative}

The strength of the equivalence between graph fibrations and equitable partitions lies in its composability. For multi QL-bit systems, the iterative construction of the composite graph via the Cartesian product $\mathcal{G} = \mathcal{G}^{(1)} \square \hspace{2pt} \mathcal{G}^{(2)} \square \dots \square \hspace{2pt} \mathcal{G}^{(n)}$ invites an iterative approach to reduction by equitable partitions. The structural relationship between a high-dimensional composite system and its simplified representation is formally governed by the unique lifting property along the fibers. 

Specifically, an equitable partition on a component QL-bit graph $\mathcal{G}^{(q)}$ induces a fibration $\varphi: \mathcal{G}^{(q)} \to B^{(q)}$, where the cells of the partition constitute the fibers of the mapping. Because the individual QL-bit graphs are undirected and regular, this fibration constitutes a graph covering, as the unique lifting property applies symmetrically to both originating and terminating edges. This mapping extends naturally to the composite system, where the reconstruction of the total graph from the base is realized by the unique lifting of arcs/edges from the quotient structure back into the higher-dimensional fibers. In an iterative context, one can apply successive fibrations, where each step collapses local fibers while preserving the equitable property of the global partition. We define this iterative reduction as a sequence of morphisms that maintain the global adjacency structure of the QL-bit system through the algebraic consistency of the quotient graph.

\begin{lemma} \label{lemma: associativity}
    \emph{(Associativity of Fibrations and Coverings)}
    The reduction of a multi-component system by graph fibrations and coverings is associative and independent of the grouping of graph operations. For a composite graph $\mathcal{G} = \mathcal{G}^{(1)} \square \hspace{2pt} \mathcal{G}^{(2)} \square \hspace{2pt} \mathcal{G}^{(3)}$, a fibration $\pi_{12}$ acting on the product $(\mathcal{G}^{(1)} \square \hspace{2pt} \mathcal{G}^{(2)})$ to produce $B^{(12)}$, followed by a fibration $\pi_{(12)3}$ acting on $(B^{(12)} \square \hspace{2pt} \mathcal{G}^{(3)})$, yields a result equivalent to any other grouping of the components. As the fibration of a QL-bit is also an opfibration, and subsequently a covering, we can extend the associativity to coverings as well.
\end{lemma}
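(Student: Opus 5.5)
The plan is to reduce the associativity claim to the algebraic content of Theorem~\ref{thm:2}, namely that quotienting commutes with the Cartesian product, together with the associativity of the Cartesian product itself. By Lemma~\ref{lemma: equivalence}, each fibration in question is the same thing as an equitable partition, with fibers equal to the cells. It therefore suffices to show that the two groupings produce the same partition of $V(\mathcal{G}^{(1)})\times V(\mathcal{G}^{(2)})\times V(\mathcal{G}^{(3)})$ and the same divisor matrix.

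\textbf{Setup.} Let $\pi_q$ denote the subgraph partition of $\mathcal{G}^{(q)}$, with characteristic matrix $S_q$ and divisor matrix $A_{\pi_q}$, as in Section~\ref{subsec:minqlbit}.

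\textbf{Left grouping.} Apply Theorem~\ref{thm:2} to $\mathcal{G}^{(1)}\square\,\mathcal{G}^{(2)}$. This gives $\pi_{12}=\pi_1\times\pi_2$, with characteristic matrix $S_1\otimes S_2$ and
\[
B^{(12)}=Q^{(1)}\square Q^{(2)}, \qquad A_{\pi_{12}}=A_{\pi_1}\otimes I+I\otimes A_{\pi_2}.
\]
We next need a fibration acting on $B^{(12)}\square\,\mathcal{G}^{(3)}$. I will interpret $\pi_{(12)3}$ as the partition that is discrete on $B^{(12)}$ and equal to $\pi_3$ on the third factor. Since the discrete partition is trivially equitable, Theorem~\ref{thm:2} applies again and gives divisor matrix $A_{\pi_{12}}\otimes I+I\otimes A_{\pi_3}$.

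\textbf{Composition.} Composing the fibrations is a single step. The map $\mathcal{G}\to B^{(12)}\square\,\mathcal{G}^{(3)}$ is $\pi_{12}\times\mathrm{id}$, which is equitable by Theorem~\ref{thm:2} with $\pi_3$ replaced by the discrete partition. The composite has fibers $C^{(1)}_i\times C^{(2)}_j\times C^{(3)}_m$, and its characteristic matrix is $(S_1\otimes S_2\otimes I)(I\otimes S_3)=S_1\otimes S_2\otimes S_3$. Applying Lemma~\ref{lemma: quotient adj} twice, from $AS=SA_\pi$ at each stage, gives $\mathcal{R}(S_1\otimes S_2\otimes S_3)=(S_1\otimes S_2\otimes S_3)M_L$, where
\[
M_L=A_{\pi_1}\otimes I\otimes I+I\otimes A_{\pi_2}\otimes I+I\otimes I\otimes A_{\pi_3}.
\]
To pass from this relation to an identification of the quotient we need that $S$ has full column rank, which holds because $S^TS$ is invertible.

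\textbf{Right grouping.} Reducing $\mathcal{G}^{(2)}\square\,\mathcal{G}^{(3)}$ first and then acting on the first factor, the same computation gives the same partition, the same characteristic matrix and the same divisor matrix $M_R=M_L$. This rests on associativity of the Kronecker product, $(X\otimes Y)\otimes Z=X\otimes(Y\otimes Z)$, and on the canonical identification $(V_1\times V_2)\times V_3\cong V_1\times(V_2\times V_3)$, which is a graph isomorphism for $\square$.

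\textbf{Quotient graphs are isomorphic.} Both groupings yield the quotient graph whose adjacency matrix is $M_L$, that is, $Q^{(1)}\square Q^{(2)}\square Q^{(3)}$. The resulting fibrations agree up to this isomorphism, which proves the fibration statement.

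\textbf{Covering statement.} Lemma~\ref{lemma: equivalence} makes each factor map an opfibration as well, because the graphs are undirected. A product of opfibrations is again an opfibration. To see this, note that an edge of $B\square B'$ is either of the form $(e,v')$ or of the form $(v,e')$; it lifts uniquely by lifting $e$ or $e'$ in the appropriate factor while holding the other coordinate fixed. Compositions of opfibrations are also opfibrations. All fibers have equal size $|C^{(1)}_i||C^{(2)}_j||C^{(3)}_m|$ when the subgraph sizes are equal, so the map is a covering.

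\textbf{Main obstacle.} The difficulty is definitional rather than computational. A weighted quotient with loops of weight $k$ and edges of weight $\ell$ is not literally the codomain of a graph morphism with unique lifting, because $\ell$ edges lift at each vertex and not one. I would address this by treating $B$ as a multigraph with $d_{ij}$ parallel arcs from $b_j$ to $b_i$, and $k$ loops at each vertex. Unique lifting is then a bijection between the arcs into $x$ and the arcs of $B$ into $\varphi(x)$, which is exactly the equitable condition. Once this convention is fixed, the fibration statement and the covering statement both reduce to the Kronecker-product identities above.
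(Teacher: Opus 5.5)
Your proposal is correct, but it takes a different route from the paper's proof.

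\textbf{The paper's route.} The paper cites D\"orfler's product theorem: a Cartesian product of covering projections is a covering projection. From this it concludes at once that $\pi_1\times\pi_2$ is a covering, and so are the partial reductions $\Phi_{12}$ and $\Phi_{23}$ (each a reduction times an identity). It then observes that both groupings compose to the same vertex function $(u_1,u_2,u_3)\mapsto(\pi_1(u_1),\pi_2(u_2),\pi_3(u_3))$.

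\textbf{Your route.} You stay inside the paper's linear-algebra framework. Equitability at each stage comes from Theorem~\ref{thm:2}, with the discrete partition handling the identity factor. Composition becomes a product of characteristic matrices, $(S_1\otimes S_2\otimes I)(I\otimes S_3)=S_1\otimes S_2\otimes S_3=(I\otimes S_2\otimes S_3)(S_1\otimes I)$, which is exactly the matrix form of the paper's pointwise identity. The common quotient is then read off as the Kronecker sum $M_L$. You prove the covering part directly, via products and composites of opfibrations, rather than by citation.

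\textbf{Trade-offs.} Your route is self-contained and gives explicit divisor matrices at every stage, including the intermediate $B^{(12)}\square\,\mathcal{G}^{(3)}$. It also handles correctly the multigraph convention ($d_{ij}$ parallel arcs, $k$ loops) that unique lifting requires, which the paper glosses over. The paper's route is much shorter: once the composite vertex maps are seen to coincide, associativity at the level of partitions is immediate, so your Kronecker bookkeeping is more than strictly needed.

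\textbf{One refinement.} You say each factor map is an opfibration ``because the graphs are undirected,'' inheriting this from Lemma~\ref{lemma: equivalence}. That step actually needs a symmetric divisor matrix, i.e.\ $|A^{(q)}|=|B^{(q)}|$; undirectedness alone fails, e.g.\ for $K_{1,3}$ split into center and leaves. For QL-bits it holds, because double counting the coupling edges gives $\ell|A^{(q)}|=\ell|B^{(q)}|$. This is also what makes your fibers equal in size.
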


For multi-QL-bit systems of arbitrary number of QL-bits, Lemma~\ref{lemma: associativity} is valid because the sequence of reductions is path-independent, it formally guarantees that minimal representations of multi-QL-bit systems can be constructed iteratively, with reduced resource scaling required to instantiate the full Cartesian product. Consequently, regardless of the grouping used during the reduction process, the emergent quantum-like eigenstates and the underlying correlation topology of the composite structure are preserved.

\section{Conclusion and outlook}\label{sec:conclusion}

This work establishes a rigorous framework for reducing the computational and physical resources required to represent composite QL-bit systems. Using equitable partitions of Cartesian products of graphs, we have demonstrated that the full Cartesian product graph of QL-bits admits a minimal structure that preserves emergent eigenspectrum and the correlation structure necessary for quantum-like properties. The minimal construction of the Cartesian product of $N_{QL}$ QL-bits results in a reduction that is exact and brings the resource scaling from $N^{N_{QL}}$ to $2^{N_{QL}}$, where the exponential dependence on the number of QL-bits remains but is now independent of the size of the individual QL-bit graphs. This represents a significant practical improvement, particularly as the number of QL-bits grow.

The connection between equitable partitions and graph fibrations, established through Lemma~\ref{lemma: equivalence}, reveals that in addition to reducing numerical costs, the minimal representation reflects a deeper topological structure of the QL-system. The fibrations perspective shows that the quotient graph preserves the local structural invariants of the full Cartesian product, providing a natural bridge between the graph-theoretic framework of QL-bits and the language of algebraic topology. This connection allows us to look at the QL-systems from a topological lens, specifically homotopy theory and fiber bundle methods.

While the minimal representation provides a theoretical limit to isolate the macroscopic emergent states used for the computational $0$ and $1$ projections, this mathematical limit inherently traces out internal subgraph dynamics. However, in any practical experimental realization, the system is not contracted to idealized singular nodes. Rather, a physical implementation takes the form of the full or the smallest intermediate representation, realized as finite-sized coupled expander graphs.

As a physical implementation of QL-bits will maintain their underlying expander subgraphs, they possess a the necessary spectral gap to support a stable global fixed point. The expander structure at the intermediate scale physically suppresses localized, uncoordinated fluctuations. Consequently, by targeting this intermediate representation, a physical system natively retains the structural robustness against decoherence while simultaneously preserving the dominant emergent states required to define the computational basis. The minimal representation thus serves as the exact macroscopic limit of an inherently robust physical architecture.

Several open questions remain. On the structural side, it is yet to be explored 
whether almost equitable partitions~\cite{cardoso_aep_2007} can yield approximate reductions with controlled spectral error for QL-bits that depart from strict regularity. On the information theoretic side, the following questions require formal treatment: what are the limits on the information that can be encoded in the full Cartesian product versus its minimal quotient graph? Besides the minimal representation, we also discuss intermediate reduced representations of the Cartesian product graph. The question therefore remains as to what level of reduction would allow us to simulate real quantum systems without compromising with the physical aspects of the problem. This, then, naturally motivates further inquiry into the functionality of these graphs, especially if the reduced representation provides us with additional and more meaningful insights into the functions of such networks. In networks corresponding to quotient graph, equitability is both a necessary and a sufficient condition for the existence of cluster-synchronized solutions~\cite{timofeyev_cluster_2025,kovalenko_equitability_2026}. This has been applied to Kuramoto oscillators, Rössler systems, power grids, and neural circuits~\cite{schaub_cluster_2016}.
Finally, the realization of QL-bits in physical systems with initial consideration to networks of mechanical and electrical oscillators~\cite{amati_encoding_2025} remains an important direction, and the minimal representations developed here provide a concrete target for such implementations.

\vspace{1mm}
\emph{Acknowledgments}---We are grateful to Mason Porter and William Munizzi for valuable discussions and insights. G.D.S. acknowledges support from the National Science Foundation under Grant No.\ CHE-2537080. P.N. acknowledges support from the Department of Energy (DOE) Office of Science (SC) under Grant No.\ DOE DE-FOA-0003432 and support from the Gordon and Betty Moore Foundation under Grant No.\ GBMF12976.

%

\widetext
\newpage
\appendix


\section{Minimal Representation of QL Systems (Lemma~3 \& Theorem~1)}\label{thm1_proof}
\subsection{Proof of Lemma~3}\label{appendix:lemma3}
\begin{proof}
A partition $\pi = \{C_1, \dots, C_m\}$ of a graph is \textit{equitable} if, for every pair $i,j$, every vertex in $C_i$ has the same number of neighbors in $C_j$; denote this constant $b_{ij}$. It suffices to fix an arbitrary vertex in each partition, $C_{i}$, and show the neighbor count into every other partition, $C_{j}$, depends only on the subgraph labels, not on the choice of vertex.

The edge rule for the Cartesian product states that $(u_1, u_2) \sim (v_1, v_2)$ in $\mathcal{G}$ if and only if exactly one coordinate is fixed and the other is adjacent in its own factor:
\begin{align}
    (u_1, u_2) \sim (v_1, v_2) \iff
    \big( u_1 = v_1 \ \text{and} \ u_2 \sim_{\mathcal{G}^{(2)}} v_2 \big)
    \ \text{or} \
    \big( u_2 = v_2 \ \text{and} \ u_1 \sim_{\mathcal{G}^{(1)}} v_1 \big).
\end{align}
Here, we use standard graph theory notation for adjacency relations where $u_2 \sim_{\mathcal{G}^{(q)}} v_2$ means $u_{2}$, $v_{2}$ are adjacent within $\mathcal{G}^{(q)}$. Hence every neighbor of $(u_1, u_2)$ arises from moving along exactly one graph while holding the other coordinate fixed, so the neighbor count of $(u_1,u_2)$ splits additively into a $\mathcal{G}^{(1)}$ contribution and a $\mathcal{G}^{(2)}$ contribution. Each contribution depends only on which subgraph of $\mathcal{G}^{(1)}$ (resp.\ $\mathcal{G}^{(2)}$) the corresponding coordinate belongs to, since $\{A^{(q)}, B^{(q)}\}$ is itself equitable in $\mathcal{G}^{(q)}$ by construction (see Section~\ref{subsec:minqlbit}).

Reiterate: $C_1 = A^{(1)} \times A^{(2)}, \quad
    C_2 = A^{(1)} \times B^{(2)}, \quad
    C_3 = B^{(1)} \times A^{(2)}, \quad
    C_4 = B^{(1)} \times B^{(2)}$.
Take $(u_1, u_2) \in C_1$, i.e.\ $u_1 \in A^{(1)}$, $u_2 \in A^{(2)}$, arbitrary and then split the neighboring nodes by which graph moves:

\begin{itemize}
    \item \textit{$\mathcal{G}^{(1)}$ node moves, $\mathcal{G}^{(2)}$ node fixed at $u_2 \in A^{(2)}$:} since $u_1 \in A^{(1)}$ and $\mathcal{G}^{(1)}_A$ is $k^{(1)}$-regular with $\ell^{(1)}$-regular coupling to $B^{(1)}$, there are exactly $k^{(1)}$ neighbors $(v_1, u_2)$ with $v_1 \in A^{(1)}$, landing in $C_1$, and exactly $\ell^{(1)}$ with $v_1 \in B^{(1)}$, landing in $C_3$. These counts are independent of which $u_1 \in A^{(1)}$ was chosen.

    \item \textit{$\mathcal{G}^{(2)}$ node moves, $\mathcal{G}^{(1)}$ node fixed at $u_1 \in A^{(1)}$:} by the same argument applied to $\mathcal{G}^{(2)}$, there are exactly $k^{(2)}$ neighbors landing in $C_1$ and exactly $\ell^{(2)}$ landing in $C_2$, independent of which $u_2 \in A^{(2)}$ was chosen.

    \item \textit{No neighbor lands in $C_4$:} reaching $C_4 = B^{(1)} \times B^{(2)}$ from $C_1$ would require both coordinates to change simultaneously, which the edge rule for $\square$ excludes.
\end{itemize}

Adding the two contributions, the neighbor counts of $(u_1, u_2)$ into $(C_1, C_2, C_3, C_4)$ are
\begin{align}
    \big( k^{(1)} + k^{(2)}, \ \ell^{(2)}, \ \ell^{(1)}, \ 0 \big),
\end{align}
a tuple depending only on $k^{(1)}, k^{(2)}, \ell^{(1)}, \ell^{(2)}$, not on the specific vertex $(u_1, u_2) \in C_1$ chosen. Hence $b_{1j}$ is well-defined for each $j$.

The identical argument, with the roles of $A^{(q)}/B^{(q)}$ exchanged as appropriate, applies verbatim starting from an arbitrary vertex of $C_2$, $C_3$, or $C_4$: in each case one factor contributes its diagonal term $k^{(q)}$ and the other contributes its off-diagonal term $\ell^{(q)}$, with the resulting $b_{ij}$ constant across all vertices of the starting subgraph by the same regularity argument. Thus $b_{ij}$ is well-defined for every pair $i, j \in \{1,2,3,4\}$, which is the equitability condition.
\end{proof}

\subsection{Proof of Theorem 1}
\begin{proof}
We prove that our choice of equitable partition is minimal with respect to preserving the relevant Cartesian product structure and the spectrum.

Let $\mathcal{G} = \mathcal{G}^{(1)} \square \hspace{2pt} \mathcal{G}^{(2)}$ be the Cartesian product of two graphs. Let $Q$ be the quotient graph induced by the equitable partition $\pi = \{C_1, C_2, \dots, C_4\}$ of $\mathcal{G}$ that contracts the individual subgraphs of $\mathcal{G}^{(1)}$ and $\mathcal{G}^{(2)}$ to individual nodes. In the QL-bit case, this gives four total nodes, and divisor matrix that is 4 $\times$ 4.  By construction, $Q = \pi_{12}(\mathcal{G}^{(1)} \Box \hspace{2pt} \mathcal{G}^{(2)})$. 

We must show that any coarser equitable partition $\pi'$ that merges cells of $\pi$ will fail to preserve this product structure or the emergent spectrum. Assume for contradiction that there exists a coarser equitable partition $\pi'$ that merges at least two distinct cells of $\pi$. There are only two cases we must check, as the only two constituents potentially changing are a cell in $\mathcal{G}^{(1)}$ or $\mathcal{G}^{(2)}$. 
\begin{enumerate}
    \item \textbf{Merging across $\mathcal{G}^{(1)}$ subgraphs:} Suppose $\pi'$ merges $C_{1}$ and $C_{2}$. This forces vertices with distinct subgraph affiliations in $\mathcal{G}^{(1)}$ into the same cell. Because the intra-subgraph edges are $k$-regular and inter-subgraph edges are $l$-regular, merging these cells averages the coupling constants. The quotient matrix entries will no longer strictly reflect the isolated $k$ and $l$ degrees, shifting the eigenvalues of the divisor matrix and destroying the specific emergent eigenpairs at $\lambda = k \pm l$.
    \item \textbf{Merging across $\mathcal{G}^{(2)}$ subgraphs:} Suppose $\pi'$ merges $C_{3}$ and $C_{4}$. By identical logic to Case 1, Case 2 follows by the same argument applied to $\mathcal{G}^{(2)}$ by symmetry of the Cartesian product.
\end{enumerate}
Since merging any cells of $\pi$ averages the distinct $k$-regular and $l$-regular degrees required to maintain the emergent eigenspectrum, any such merger results in a contradiction. Additionally, merging any two cells of $\pi$ results in less than four emergent states, which does not recover the emergent states of the two graph Cartesian product case. Therefore, the 4-node quotient graph $Q$ is the minimal valid representation. This argument can be extended to an arbitrary number of graph Cartesian products on $k$-regular graphs, where one considers the pairwise merging for any combination of subgraphs in the higher order Cartesian product. 

Finally, considering the hierarchy of minimal representations, the proof is valid for any level of Cartesian product reduction on the $k$-regular graphs considers. The minimal representation argument is applied to the graphs one wants to reduce, while ignoring the graphs one wants to preserve. 
\end{proof}

\section{Quotienting commutes with the Cartesian product}\label{appendix:minrepcp}
\subsection{Preliminaries for Theorem 2}
\begin{observation}[A Kronecker product of characteristic matrices is a characteristic
matrix]
Let $S_1\in\{0,1\}^{n_1\times r_1}$ and $S_2\in\{0,1\}^{n_2\times r_2}$ be characteristic
matrices, of partitions $\pi_1,\pi_2$. Then $S_1\otimes S_2$ is a characteristic matrix, and
its columns are the indicators of the product cells $C_i^{(1)}\times C_j^{(2)}$ --- i.e.\ it is
the characteristic matrix of the product partition $\pi_1\times\pi_2$.
\end{observation}

\begin{proof}
Index the rows of $S_1\otimes S_2$ by pairs $(u,v)$ with $u\in V_1$, $v\in V_2$, and the
columns by pairs $(i,j)$. By definition of the Kronecker product,
\[
(S_1\otimes S_2)_{(u,v),(i,j)} = (S_1)_{u,i}(S_2)_{v,j}.
\]
Each entry is a product of two numbers in $\{0,1\}$, hence lies in $\{0,1\}$. Row $(u,v)$ has a
single nonzero entry: row $u$ of $S_1$ has its unique $1$ at $i=c_1(u)$ and row $v$ of $S_2$
has its unique $1$ at $j=c_2(v)$, so the product equals $1$ exactly at
$(i,j)=(c_1(u),c_2(v))$ and $0$ otherwise. Finally, column $(i,j)$ is the indicator of
$\{(u,v): u\in C_i^{(1)},\, v\in C_j^{(2)}\} = C_i^{(1)}\times C_j^{(2)}$, which is nonempty
because $C_i^{(1)}$ and $C_j^{(2)}$ are. Hence $S_1\otimes S_2$ has exactly one $1$ per row and
nonempty columns --- a characteristic matrix, of $\pi_1\times\pi_2$.
\end{proof}

\renewcommand{\thelemma}{6}
\begin{lemma}[Converse of the manuscript's Lemma~3, which is also known to hold]
Let $G$ have adjacency matrix $A$, and let $S$ be the characteristic matrix of a partition
$\pi=\{C_1,\dots,C_r\}$. If there exists $B\in\mathbb{R}^{r\times r}$ with $AS=SB$, then $\pi$
is equitable and $B$ is its divisor matrix.
\end{lemma}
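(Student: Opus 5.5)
The plan is to read the identity $AS=SB$ entrywise and use the structure of $S$ to extract neighbor counts. Write $c(u)$ for the index of the cell containing vertex $u$, so that $S_{u,j}=1$ if and only if $j=c(u)$. For a vertex $u$ and a cell index $j$, the left-hand side gives
\[
(AS)_{u,j}=\sum_{w\in V} A_{u,w}S_{w,j}=\sum_{w\in C_j}A_{u,w},
\]
which is the number of neighbors of $u$ in $C_j$; for weighted graphs it is the total edge weight from $u$ into $C_j$. The right-hand side gives
\[
(SB)_{u,j}=\sum_{m} S_{u,m}B_{m,j}=B_{c(u),j},
\]
because row $u$ of $S$ has exactly one nonzero entry, located at $m=c(u)$.

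Equating the two expressions shows that for every $u\in C_i$, the number of neighbors of $u$ in $C_j$ equals $B_{ij}$. This quantity depends only on $i$ and $j$, not on the choice of $u$. That is exactly the definition of an equitable partition, with $d_{ij}=B_{ij}$. Since the divisor matrix is defined by $A_{\pi,ij}=d_{ij}$, it follows that $B=A_\pi$.

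As a consistency check, uniqueness of $B$ also follows from the formula in Lemma~\ref{lemma: quotient adj}. The matrix $S$ has full column rank because its columns are nonempty indicator vectors with disjoint supports, so $S^TS$ is invertible. Multiplying $AS=SB$ on the left by $(S^TS)^{-1}S^T$ gives $B=(S^TS)^{-1}S^TAS=A_\pi$, which agrees with the entrywise argument.

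I expect no real obstacle here. The one point requiring care is to use that every row of $S$ contains exactly one $1$ and that every cell is nonempty, so that each index $i$ is realized by some vertex and $B_{c(u),j}$ pins down every entry of $B$. If any cell were empty, the rows of $B$ indexed by empty cells would be unconstrained.
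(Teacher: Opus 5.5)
Your proposal is correct and follows essentially the same argument as the paper. The paper likewise reads $AS=SB$ entrywise as $|N(v)\cap C_j| = B_{c(v),j}$ to get equitability with $d_{ij}=B_{ij}$, then uses the invertibility of $S^TS=\mathrm{diag}(|C_1|,\dots,|C_r|)$ to conclude that $B$ is forced to equal $(S^TS)^{-1}S^TAS$.
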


\begin{proof}
Fix a vertex $v$ and a cell index $j$, and let $c(v)$ be the index of the cell containing $v$.
Then
\[
(AS)_{v,j} = \sum_w A_{vw}S_{wj} = \sum_{w\in C_j} A_{vw} = |N(v)\cap C_j|,
\]
the number of neighbours of $v$ in cell $C_j$, whereas
\[
(SB)_{v,j} = \sum_k S_{vk}B_{kj} = B_{c(v),j}
\]
since row $v$ of $S$ has its single $1$ in column $c(v)$. The hypothesis $AS=SB$ thus gives,
for every vertex $v$ and cell $C_j$,
\[
|N(v)\cap C_j| = B_{c(v),j}.
\]
The right-hand side depends on $v$ only through its cell $c(v)$; hence any two vertices in the
same cell $C_i$ have equal neighbour-counts $|N(\cdot)\cap C_j| = B_{ij}$ into every cell
$C_j$. That is exactly the definition of an equitable partition, with intercell degrees
$d_{ij}=B_{ij}$ --- i.e.\ $B=A_\pi$. The value is unique: $S$ has full column rank because
$T=S^TS = \mathrm{diag}(|C_1|,\dots,|C_r|)$ is invertible, so $B=(S^TS)^{-1}S^TAS$ is forced.
\end{proof}

\subsection{Proof of Theorem 2}
\begin{proof}
Let $A_1, A_2$ be the adjacency matrices of $G^{(1)}, G^{(2)}$, so the Cartesian-product
adjacency is $A = A_1\otimes I + I\otimes A_2$. Set $S = S_1\otimes S_2$; by Observation~1,
$S$ is the characteristic matrix of $\pi_1\times\pi_2$. Using the mixed-product rule
$(X\otimes Y)(Z\otimes W) = XZ\otimes YW$ together with the per-factor relations
$A_iS_i = S_iA_{\pi_i}$ (the manuscript's Lemma~1 applied to each factor),
\[
AS = (A_1\otimes I + I\otimes A_2)(S_1\otimes S_2)
= A_1S_1\otimes S_2 + S_1\otimes A_2S_2
= S_1A_{\pi_1}\otimes S_2 + S_1\otimes S_2A_{\pi_2}
= SB,
\]
with $B = A_{\pi_1}\otimes I + I\otimes A_{\pi_2}$. Thus $S$ is a characteristic matrix
(Observation~1) satisfying $AS=SB$. By Lemma~6, the partition $\pi_1\times\pi_2$ is equitable
and its divisor matrix is $B = A_{\pi_1}\otimes I + I\otimes A_{\pi_2}$. Since $B$ is precisely
the adjacency matrix of the Cartesian product of the two quotient graphs (whose adjacency
matrices are $A_{\pi_1}, A_{\pi_2}$), this is the claimed identity
$\pi\big(G^{(1)}\square G^{(2)}\big) = Q^{(1)}\square Q^{(2)}$.
\end{proof}

\begin{remark}[Multi-factor corollary]
Applying Theorem~2 inductively to the pair\\
$\big(G^{(1)}\square\cdots\square G^{(N-1)},\, G^{(N)}\big)$ and using only the associativity
of $\square$ gives
\[
\pi\Big(\textstyle\square_{q=1}^{N} G^{(q)}\Big) = \textstyle\square_{q=1}^{N} Q^{(q)},
\]
with divisor equal to the Kronecker sum $\bigoplus_q A_{\pi_q}$ --- the weighted hypercube for
$N$ QL-bits. The manuscript's product-free construction (\S III) and its associativity result
(Lemma~5) are corollaries of this single elementary theorem.
\end{remark}

\section{Equivalence of Fibrations and Covering Projections to Equitable Partitions (Lemma 4)}\label{lemma3_proof}

\begin{proof}
We prove the bidirectional equivalence between a graph fibration $\varphi: G \rightarrow B$ and an equitable partition $\mathcal{P}$ of $V(G)$, and extend this mapping to covering projections for undirected regular graphs.

\textbf{A graph fibration induces an equitable partition.}
Let $\varphi: G \rightarrow B$ be a graph fibration. Define a partition $\mathcal{P} = \{C_i\}$ on $V(G)$ such that each cell $C_i$ is the fiber over a vertex $b_i \in V(B)$; explicitly, $C_i = \varphi^{-1}(b_i)$. 

Let $u \in C_i$ be an arbitrary vertex. We count the number of edges originating in a specific cell $C_j$ and terminating at $u$. By the unique lifting property of a graph fibration, for every edge $e \in E(B)$ originating at $b_j$ and terminating at $b_i$, there exists exactly one lifted edge $\tilde{e}^u \in E(G)$ originating in $C_j$ and terminating at $u$. Therefore, the total number of edges from $C_j$ to $u$ is exactly equal to the number of edges from $b_j$ to $b_i$ in the base graph $B$. Since this quantity depends only on the base graph and is identical for all $u \in C_i$, the partition $\mathcal{P}$ satisfies the constant neighbor count and is therefore equitable.

\textbf{An equitable partition induces a graph fibration.}
Let $\mathcal{P} = \{C_1, \dots, C_m\}$ be an equitable partition of $G$. Let $d_{j,i}$ be the constant number of edges originating in cell $C_j$ and terminating at any specific vertex in cell $C_i$. 

We construct a base graph $B$ with vertices $V(B) = \{b_1, \dots, b_m\}$. For every pair of vertices $(b_j, b_i)$, we add exactly $d_{j,i}$ edges from $b_j$ to $b_i$. We then define a vertex projection map $\varphi_{V}: V(G) \rightarrow V(B)$ such that $\varphi_{V}(v) = b_i$ for all $v \in C_i$, and an edge projection map $\varphi_{E}: E(G) \rightarrow E(B)$ by $\varphi_{E}(u,v) = (b_j,b_i)$ for any edge $(u,v)$ with $u \in C_{j}$ and $v \in C_{i}$. The map $\varphi_{E}$ is well-defined since the equitable condition guarantees that all edges between $C_{j}$ and $C_{i}$ map to the same edge $(b_{j},b_{i})$ in $B$.

We now verify the unique lifting property. Let $e=(b_{j},b_{i}) \in E(B)$ be any edge terminating at $b_{i}$, and let $v \in C_{i}$ be any vertex $\varphi_{V}(v) = b_i$. Since $\mathcal{P}$ is equitable, for any vertex $v \in C_i$, the number of incoming edges from $C_j$ is exactly equal to $d_{j,i}$, each of which project to $e$ under $\varphi_{E}$. Therefore, the base graph $B$ contains exactly $d_{j,i}$ edges from $b_j$ to $b_i$. Consequently, there is a one-to-one correspondence between the incoming edges at $v$ and the incoming edges at $b_i$. This satisfies the unique lifting property, verifying that $\varphi$ is a graph fibration.

\textbf{Extension to Covering Projections:} In the specific case of QL-bits, the graphs $\mathcal{G}^{(q)}$ are undirected. Consequently, the adjacency matrix is symmetric, meaning the constant in-degree $d_{j,i}$ is strictly equal to the constant out-degree $d_{i,j}$. Because the edges are bidirectional, the unique lifting property applies equally to the originating ends of the edges. Thus, the graph fibration $\varphi$ is simultaneously an opfibration. A mapping that is both a fibration and an opfibration is, by definition, a covering projection. Therefore, the equitable partitions on these symmetric QL-bit structures correspond strictly to graph coverings over the base graph $B$.
\end{proof}

\section{Associativity of Fibrations and Coverings (Lemma 5)}\label{lemma4_proof}

\begin{proof}
We want to show that reducing parts of a Cartesian product one step at a time is mathematically valid, and that the associative grouping of these reductions does not change the final quotient structure.

Let $G = G^{(1)} \Box \hspace{2pt} G^{(2)} \Box \hspace{2pt} G^{(3)}$ be the Cartesian product of three graphs. Let $\pi_i: G^{(i)} \to B^{(i)}$ be individual covering projections for each component graph $i \in \{1, 2, 3\}$. We define the composite base graphs as $B^{(12)} = B^{(1)} \Box \hspace{1pt} B^{(2)}$ and $B^{(23)} = B^{(2)} \Box \hspace{1pt} B^{(3)}$.

A fundamental property of covering projections is that the Cartesian product of any two covering projections is itself a covering projection~\cite{dorfler_multiple_1979}. Therefore, combining $\pi_1$ and $\pi_2$ yields a valid covering projection $\pi_{12}: G^{(1)} \Box \hspace{2pt} G^{(2)} \to B^{(12)}$. 

We can extend this to the full three graph system by combining a reduction with an identity map. Thus, the partial reduction map $\Phi_{12}: V(G) \to V(B^{(12)} \Box \hspace{2pt} G^{(3)})$, which reduces the first two graphs and leaves the third unchanged, is inherently a valid covering projection. Similarly, the alternative partial reduction map $\Phi_{23}: V(G) \to V(G^{(1)} \Box \hspace{1pt} B^{(23)})$, which reduces the last two graphs, is also a valid covering projection. No separate manual verification of edge preservation or neighborhood bijection is needed, as it follows directly from the general product theorem.

To prove associativity, we compare the sequence of reductions. Grouping the reductions as $(\pi_1 \times \pi_2)$ followed by $\pi_3$ yields the intermediate graph $B^{(12)} \Box \hspace{2pt} G^{(3)}$ before reaching the final state. Grouping them as $\pi_1$ followed by $(\pi_2 \times \pi_3)$ yields the intermediate graph $G^{(1)} \Box \hspace{1pt} B^{(23)}$. 
    
In both cases, the final composite maps align, mapping any vertex $(u_1, u_2, u_3)$ to the output $(\pi_1(u_1), \pi_2(u_2), \pi_3(u_3))$. Because applying the reductions in either associative grouping equates to the same mathematical function, the reduction operations on Cartesian products are associative.
\end{proof}

\end{document}